\tikzset{meter/.append style={draw, inner sep=10, rectangle, font=\vphantom{A}, minimum width=30, line width=.8,
		path picture={\draw[black] ([shift={(.1,.3)}]path picture bounding box.south west) to[bend left=50] ([shift={(-.1,.3)}]path picture bounding box.south east);\draw[black,-{Stealth[length=1mm, width=1mm]}] ([shift={(0,.15)}]path picture bounding box.south) -- ([shift={(.2,-.15)}]path picture bounding box.north);}}}
\newtheorem{proposition}{Proposition}
\newtheorem{definition}{Definition}
\newtheorem{lemma}{Lemma}
\newtheorem{theorem}{Theorem}
\newtheorem{corollary}{Corollary}
\newtheorem{example}{Example}
\newcommand{\NOTll}{\hskip 0.4mm \not \hskip -0.4mm \ll}
\DeclareMathOperator*{\supp}{supp}
\newcommand{\X}{\mathbb{X}}
\newcommand{\Y}{\mathbb{Y}}
\title{Barycentric and Pairwise R\'{e}nyi Quantum Leakage}
\author{Farhad Farokhi\thanks{
F. Farokhi is with the Department of Electrical and Electronic Engineering, the University of Melbourne, Parkville, VIC 3010, Australia. e-mail: farhad.farokhi@unimelb.edu.au}}
\begin{document}
\maketitle

\begin{abstract}
Barycentric and pairwise quantum R\'{e}nyi leakages are proposed as two measures of information leakage for privacy and security analysis in quantum computing and communication systems. These quantities both require minimal assumptions on the eavesdropper, i.e., they do not make any assumptions on the eavesdropper's attack strategy or the statistical prior on the secret or private classical data encoded in the quantum system. They also  satisfy important properties of positivity, independence, post-processing inequality, and unitary invariance. The barycentric quantum R\'{e}nyi leakage can be computed by solving a semi-definite program and the pairwise quantum R\'{e}nyi leakage possesses an explicit formula. The  barycentric and pairwise quantum R\'{e}nyi leakages form upper bounds on the maximal quantum leakage, the sandwiched quantum $\alpha$-mutual information,  the accessible information, and the Holevo's information. Furthermore, differentially-private quantum channels are shown to bound these measures of information leakage. Global and local depolarizing channels, that are common models of noise in quantum computing and communication, restrict private or secure information leakage. Finally, a privacy-utility trade-off formula in quantum machine learning using variational circuits is developed. The privacy guarantees can only be strengthened, i.e., information leakage can only be reduced, if the performance degradation grows larger and \textit{vice versa}. \end{abstract}

\section{Introduction}\label{sec:intro}
Quantum computing provides various improvements over classical counterparts, such as speed up~\cite{shor1994algorithms}, security~\cite{shor2000simple}, and robustness~\cite{west2023towards}. These advantages have motivated considerable attention towards theory and practice of quantum computing systems. A particularly fruitful direction is quantum machine learning~\cite{biamonte2017quantum}. However, machine learning and data analysis can result in unintended or undesired information leakage~\cite{kearns2019ethical}. As quantum computing hardware becomes more commercially available and quantum algorithm move to the public domain, these privacy and security threats can prove to be detrimental in adoption of quantum technologies particularly for real-world sensitive, private, or proprietary datasets. Therefore, we need to develop rigorous frameworks for understanding information leakage in quantum systems and constructing  secure and private algorithms by minimizing unintended information leakage. Note that the applications of these measures of information leakage is not entirely restricted to quantum machine learning or data privacy. Even in quantum communication, there is a need to understand how much information an eavesdropper can extract from the underlying quantum systems~\cite{farokhiPRA}. 

A common drawback of current notions of information leakage, such as quantum mutual information, accessible information, and Holevo's information, is that they implicitly assume they know the intention of the eavesdropper or the adversary. They assume that the eavesdropper is interested in extracting the entirety of the classical data that is encoded in the state of the quantum system (for communication or analysis). While perfectly reasonable in computing capacity of quantum channels and developing information storage or compression strategies, this assumption can be problematic in security or privacy analysis. In practice, we may not know the intention of the eavesdropper. Imposing extra assumptions on the eavesdropper is akin to underestimating its capabilities, which can be a lethal flaw in security or privacy analysis. Furthermore, in the classical setting, it is shown that mutual information and its derivatives are not suitable measures of information leakage for security and privacy analysis~\cite{issa2019operational}. These observations motivated developing a \textit{maximal} notion of information leakage that is more suited to the task at hand~\cite{issa2019operational,farokhi2021measuring}.

Earlier attempts in developing the corresponding notion of maximal information leakage in quantum systems resulted in maximal quantum leakage~\cite{farokhiPRA}, which was shown to satisfy important properties of positivity (i.e., information leakage is always greater than or equal to zero), independence property (i.e., information leakage is zero if the quantum state is independent of the classical data), and post-processing inequality (i.e., information leakage can only be reduced if an arbitrary quantum channel is applied to the quantum state and therefore additional processing cannot increase information leakage). These properties are cornerstones of axiomatic frameworks for measuring information leakage in classical security analysis~\cite{issa2019operational,farokhi2021measuring}. However, maximal quantum leakage proposed in~\cite{farokhiPRA} did not possess an explicit formula and an iterative algorithm was required to compute it in general, c.f., accessible information~\cite{vrehavcek2005iterative}.

In this paper, we propose two new measures of information leakage, namely, barycentric and pairwise quantum R\'{e}nyi leakages, based on the quantum
R\'{e}nyi divergence or the sandwiched quantum R\'{e}nyi relative entropy~\cite{muller2013quantum} of order $\infty$. These two quantities form upper bounds for the maximal quantum leakage~\cite{farokhiPRA}. They both require minimal assumptions on the eavesdropper, i.e., they do not assume the eavesdropper's attack strategy is known and they do not require priors on the secret or private classical data encoded in the states of the quantum system for communication or analysis. Both barycentric and pairwise quantum R\'{e}nyi leakages satisfy important properties of positivity, independence, and post-processing inequality. They also satisfy unitary invariance, i.e., application of a unitary on the quantum state does not change the information leakage. Unitary invariance in postulated to be important for a quantum measure of information~\cite{muller2013quantum,10106314906367}. These measures are computationally superior to quantum maximal leakage. The barycentric quantum R\'{e}nyi leakage can be computed by solving a semi-definite program while the pairwise quantum R\'{e}nyi leakage possesses an explicit formula. The barycentric and pairwise quantum R\'{e}nyi leakages form upper bounds for the sandwiched quantum $\alpha$-mutual information, the accessible information, and the Holevo's information. Finally, we show that differentially-private quantum channels~\cite{hirche2023quantum} bound the barycentric and pairwise quantum R\'{e}nyi leakages. Therefore, global and local depolarizing channels, that are common models of noise in quantum computing devices and quantum communication systems, are effective quantum channels for bounding private or secure information leakage. Using this result, we develop a privacy-utility trade-off in quantum machine learning using variational circuits. This fundamental trade-off demonstrates that the privacy guarantees can only be strengthened, i.e., information leakage is reduced, if the performance degradation becomes larger and \textit{vice versa}. This is a novel characterization of the trade-off between privacy and utility in quantum machine learning. 

The rest of the paper is organized as follows. We first present some preliminary material on classical and quantum information theory in Section~\ref{sec:prem}. We then formalize the barycentric and pairwise quantum R\'{e}nyi leakages in Section~\ref{sec:info_leakage}. We establish the relationship between these measures of information leakage and quantum differential privacy in Section~\ref{sec:DP}. We investigate privacy-utility trade-off for quantum machine learning in Section~\ref{sec:qml}. We finally conclude the paper in Section~\ref{sec:conc}.

\section{Preliminary Information} \label{sec:prem}
In this section, we review some basic concepts from classical and quantum information theory. A reader with this knowledge may benefit from directly jumping to Section~\ref{sec:info_leakage}.

\subsection{Random Variables and Classical Information}
Random variables are denoted by capital Roman letters, such as $X\in\X$ and $Y\in\Y$. A random variable $X$ is discrete if the set of all its possible outcomes $\X$ is finite. Any discrete random variable $X$ is fully described by its probability mass function  $p_X(x):=\mathbb{P}\{X=x\}>0$ for all $x\in\mathbb{X}$. The support set of random variable $X$ is defined as $\supp(X)=\supp(p_X):=\{x\in\X\,|\,p_X(x)>0\}\subseteq\X$. The set of all probability mass functions with domain $\X$ is $\Delta(\X):=\{\pi:\X\rightarrow\mathbb{R}_{\geq 0}\,|\,\sum_{x\in\X}\pi(x)=1\}$.

For all $\alpha\in(0,1)\cup(1,\infty)$ and probability mass functions $p,q$ such that $\supp(p)\subseteq\supp(q)$, the R\'{e}nyi divergence (or relative entropy) of order $\alpha$~\cite{renyi1961measures,van2014renyi} is 
\begin{align*}
d_\alpha(p\|q):=\frac{1}{\alpha-1}\log\left(\sum_{x\in\supp(q)}p^\alpha(x)q^{1-\alpha}(x) \right).
\end{align*}
All logarithms, in this paper, are in base 2 and, therefore, all the information quantities are measured in bits. On few occasions, we would require logarithm in the natural basis, which is denoted by $\ln(\cdot)$ instead of $\log(\cdot)$. 
By convention, $d_\alpha(p\|q)=\infty$ if $\supp(p)\nsubseteq\supp(q)$ and $\alpha\geq  1$. For $\alpha< 1$, the same definition holds even if $\supp(p)\nsubseteq\supp(q)$. For $\alpha=1,\infty$ (and also $\alpha=0$ which is not used in this paper), we define the R\'{e}nyi divergence  by extension:
\begin{align*}
    d_1(p\|q)&:=\lim_{\alpha\rightarrow 1}d_\alpha (p\|q)=\!\!\!\!\sum_{x\in\supp(q)}p(x)\log\left(\frac{p(x)}{q(x)} \right),\\
    d_\infty(p\|q)&:=\lim_{\alpha\rightarrow \infty}d_\alpha (p\|q)=\log\left(\max_{x\in\supp(q)}\frac{p(x)}{q(x)} \right).
\end{align*}
Note that $d_1(p\|q)$ is the usual Kullback–Leibler (KL) divergence~\cite[\S~2.3]{cover2012elements}. 
Sibson's $\alpha$-mutual information, an extension of mutual information in information theory~\cite{verdu2015alpha}, between random variables $X\in\X$ and $Y\in\Y$ is
\begin{align} \label{eqn:Sibson_info}
    I_\alpha(X;Y):=\inf_{\tilde{q}} d_\alpha(P_{XY}\|P_X\times \tilde{q}),
\end{align}
where $P_{XY}\in\Delta(\X\times\Y)$ is the joint probability mass function for joint random variable $(X,Y)$, $P_X\in\Delta(\X)$ and $P_Y\in\Delta(\Y)$ are the marginal probability mass functions for random variable $X$ and $Y$ separately, and $\tilde{q}\in\Delta(\Y)$ is any general probability mass function. By continuity~\cite{verdu2015alpha}, we get
\begin{align*}
    I_1(X;Y)\!:=&\lim_{\alpha\rightarrow 1}I_\alpha (X;Y)\\
    =&\hspace{-.4in} \sum_{(x,y)\in\supp(p_X)\times\supp(p_Y)} \hspace{-.4in}\!\!\!\!P_{X,Y}(x,y)\log\left(\frac{P_{X,Y}(x,y)}{P_{X}(x)P_{Y}(y)} \right)\!,\\
    I_\infty(X;Y)\!:=&\lim_{\alpha\rightarrow \infty}I_\alpha (X;Y)\\
    =&\log\left(\sum_{y\in\Y}\max_{x\in\supp(X)}P_{Y|X}(y|x) \right)\!,
\end{align*}
where $I_1(X;Y)$ is the common mutual information~\cite[\S~2.3]{cover2012elements}. For a more thorough treatment of the R\'{e}nyi divergence and $\alpha$-mutual information, see~\cite{verdu2015alpha}.

\subsection{Quantum States and Information}
We denote finite-dimensional Hilbert space by $\mathcal{H}$. The set of linear operators from $\mathcal{H}$ to $\mathcal{H}$ is denoted by $\mathcal{L}(\mathcal{H})$. Further, $\mathcal{P}(\mathcal{H})\subset\mathcal{L}(\mathcal{H})$ denotes the set of positive semi-definite operators on Hilbert space $\mathcal{H}$ and $\mathcal{S}(\mathcal{H})\subset\mathcal{P}(\mathcal{H})$ denotes the set of density operators on $\mathcal{H}$, i.e., the set of positive semi-definite operators with unit trace. The state of a quantum system is modelled by a density operator in $\mathcal{S}(\mathcal{H})$. Lower case Greek letters, such as $\rho$ and $\sigma$, are often used to denote density operators or quantum states. A formalism to model quantum measurements is the positive operator-valued measure (POVM), i.e., a set of positive semi-definite matrices $F=\{F_i\}_{i}$ such that $\sum_{i} F_i=I$. POVMs are particularly useful when the post-measurement state of the quantum system is of no interest, i.e., the quantum system is discarded after measurement. For POVM $F=\{F_i\}_{i}$, the probability of obtaining output $i$ when taking a measurement on a system with quantum state $\rho$ is $\trace(\rho F_i)=\trace( F_i \rho)$. This is typically called the Born's rule. A quantum channel is a mapping from the space of density operators to potentially another space of density operators that is both completely positive and trace preserving. Calligraphic capital Roman letters, such as $\mathcal{E}$ and $\mathcal{N}$, are used to denote quantum channels. For a more detailed treatment of basic definitions and properties in quantum information theory, see~\cite{wilde2013quantum}.

For all $\alpha\in(0,1)\cup(1,\infty)$, and arbitrary $\rho\in\mathcal{S}(\mathcal{H})$ and $\sigma\in\mathcal{P}(\mathcal{H})$, the quantum R\'{e}nyi relative entropy~\cite{PETZ198657} is 
\begin{align*}
    D_\alpha (\rho\|\sigma):=\frac{1}{\alpha-1}\log\left(\trace\left(\rho^\alpha\sigma^{1-\alpha}\right)\right),
\end{align*}
if the support set of $\rho$ is contained within the support set of $\sigma$, denoted by $\rho\ll \sigma$, i.e., the kernel of operator $\sigma$ lines within the kernel of operator $\rho$. By convention, $D_\alpha (\rho\|\sigma)=\infty$ if $\rho \NOTll \sigma$ and $\alpha\geq 1$. For $\alpha< 1$, the same definition holds even if $\rho \NOTll \sigma$. For $\alpha=1,\infty$, we can define the quantum R\'{e}nyi relative entropy by extension:
\begin{align*}
    D_1(\rho\|\sigma)&:=\lim_{\alpha\rightarrow 1}D_\alpha (\rho\|\sigma)=\trace\left(\rho(\log(\rho)-\log(\sigma))\right),\\
    D_\infty(\rho\|\sigma)&:=\lim_{\alpha\rightarrow \infty}D_\alpha (\rho\|\sigma)=\log\left(\max_{i,j:\bra{i}\ket{\bar{j}}\neq 0}\frac{v_i}{\lambda_j} \right),
\end{align*}
where $\rho=\sum_{i}v_i \ket{i}\bra{i}$ and $\sigma=\sum_j\lambda_j\ket{\bar{j}}\bra{\bar{j}}$. Note that ${D}_1(\rho\|\sigma)$ is the usual quantum relative entropy~\cite{umegaki1962conditional}.  Similarly, for all $\alpha\in(0,1)\cup(1,\infty)$, and arbitrary $\rho\in\mathcal{S}(\mathcal{H})$ and $\sigma\in\mathcal{P}(\mathcal{H})$ such that $\rho\ll \sigma$, the quantum R\'{e}nyi divergence or the sandwiched quantum R\'{e}nyi relative entropy~\cite{muller2013quantum} is 
\begin{align}
    \widetilde{D}_\alpha (\rho\|\sigma):=&\frac{1}{\alpha-1}\log\left(\trace\left(\left(\sigma^{\frac{1-\alpha}{2\alpha}}\rho\sigma^{\frac{1-\alpha}{2\alpha}}\right)^\alpha\right)\right)\nonumber\\
    =&\frac{1}{\alpha-1}\log\left(\trace\left(\left(\sigma^{\frac{1-\alpha}{\alpha}}\rho\right)^\alpha\right)\right)\nonumber\\
    =&\frac{1}{\alpha-1}\log\left(\trace\left(\left(\rho\sigma^{\frac{1-\alpha}{\alpha}}\right)^\alpha\right)\right).\label{eqn:tildeD:def}
\end{align}
Again, by convention, $\widetilde{D}_\alpha (\rho\|\sigma)=\infty$ if $\rho \NOTll \sigma$ and $\alpha\geq 1$. For $\alpha< 1$, the same definition holds even if $\rho \NOTll \sigma$.  Note that these alternative formulations stem from the fact that, for any two operators $\rho,\sigma\in\mathcal{L}(\mathcal{H})$, $\rho\sigma$ and $\sigma\rho$ have the same eigenvalues~\cite[Exercise~I.3.7]{bhatia2013matrix}. For $\alpha=1,\infty$, we can define the sandwiched quantum R\'{e}nyi relative entropy by extension:
\begin{align*}
    \widetilde{D}_1(\rho\|\sigma)&\!:=\!\!\lim_{\alpha\rightarrow 1}\!\widetilde{D}_\alpha (\rho\|\sigma)\!=\!\trace\left(\rho(\log(\rho)-\log(\sigma))\right),\\
    \widetilde{D}_\infty(\rho\|\sigma)&\!:= \!\!\lim_{\alpha\rightarrow \infty}\!\widetilde{D}_\alpha (\rho\|\sigma)\!=\!\log\left(\inf\{\mu\in\mathbb{R}\!:\! \rho\leq \mu \sigma \} \right),
\end{align*}
where $\widetilde{D}_1(\rho\|\sigma)={D}_1(\rho\|\sigma)$ is the usual quantum relative entropy~\cite{umegaki1962conditional}. 

\begin{lemma} \label{lemma:properties:sandwiched}
The following results hold for the sandwiched quantum R\'{e}nyi relative entropy:
\begin{itemize}
    \item[(a)] \textbf{Post-Processing Inequality}: For any $\rho\in\mathcal{S}(\mathcal{H})$ and $\sigma\in\mathcal{P}(\mathcal{H})$, such that $\rho\ll\sigma$, and any quantum channel $\mathcal{E}$, 
    $\widetilde{D}_\alpha (\mathcal{E}(\rho)\|\mathcal{E}(\sigma))\leq \widetilde{D}_\alpha (\rho\|\sigma)$;
    \item[(b)] \textbf{Order Axiom}: $\widetilde{D}_\alpha (\rho\|\sigma)\leq 0$ if $\rho\preceq \sigma$ and $\widetilde{D}_\alpha (\rho\|\sigma)\geq 0$ if $\rho\succeq \sigma$;
    \item[(c)] \textbf{Unitary Invariance}: $\widetilde{D}_\alpha (U\rho U^\dag\|U\sigma U^\dag)=\widetilde{D}_\alpha (\rho\|\sigma)$.
\end{itemize}
\end{lemma}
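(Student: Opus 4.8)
The plan is to dispatch the three items in increasing order of difficulty: unitary invariance (c) by direct computation, the order axiom (b) by reduction to already-catalogued facts, and the post-processing inequality (a) as the one genuinely deep ingredient. Since (b) will lean on (a), I would establish (c), then (a), then (b).

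For (c), I would substitute $U\rho U^\dag$ and $U\sigma U^\dag$ into the defining expression \eqref{eqn:tildeD:def}. Because conjugation by a unitary commutes with the continuous functional calculus, $(U\sigma U^\dag)^{(1-\alpha)/\alpha}=U\sigma^{(1-\alpha)/\alpha}U^\dag$, so the operator inside the trace equals $U(\sigma^{(1-\alpha)/\alpha}\rho)U^\dag$; raising to the power $\alpha$ and using unitary invariance of the trace, $\trace(UAU^\dag)=\trace(A)$, returns exactly $\widetilde{D}_\alpha(\rho\|\sigma)$. The boundary cases $\alpha=1,\infty$ follow by the same manipulation on their closed forms — for $\alpha=\infty$, note that $U\rho U^\dag\preceq\mu\,U\sigma U^\dag$ iff $\rho\preceq\mu\sigma$, so the infimum defining $\widetilde{D}_\infty$ is unchanged — or by continuity in $\alpha$.

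The crux is (a). Proving the post-processing (data-processing) inequality for the sandwiched divergence from scratch is a substantial result, so I would invoke the established monotonicity of $\widetilde{D}_\alpha$ under quantum channels for $\alpha\geq 1/2$~\cite{muller2013quantum} (equivalently via operator-convexity arguments), which covers the range $\alpha\in(1,\infty)$ and, by continuity, the values $\alpha=1,\infty$ used here. The conceptual route is the Stinespring reduction: write $\mathcal{E}(\cdot)=\operatorname{Tr}_E\big(V(\cdot)V^\dag\big)$ for an isometry $V$; invariance under the isometry is the isometric extension of part (c), so the whole inequality reduces to monotonicity under the partial trace $\operatorname{Tr}_E$, which is the hard analytic step and the one I would cite rather than re-derive.

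Finally, (b) follows from (a) together with the extremal formula for $\widetilde{D}_\infty$ and the monotonicity of $\widetilde{D}_\alpha$ in $\alpha$. If $\rho\preceq\sigma$, then $\mu=1$ is feasible in $\{\mu:\rho\preceq\mu\sigma\}$, so $\widetilde{D}_\infty(\rho\|\sigma)\leq 0$; since $\widetilde{D}_\alpha$ is non-decreasing in $\alpha$, $\widetilde{D}_\alpha(\rho\|\sigma)\leq\widetilde{D}_\infty(\rho\|\sigma)\leq 0$. If instead $\rho\succeq\sigma$, I would apply (a) with the completely positive, trace-preserving full trace channel $\mathcal{E}=\trace$: this yields $\widetilde{D}_\alpha(\rho\|\sigma)\geq\widetilde{D}_\alpha(\trace(\rho)\|\trace(\sigma))=-\log\trace(\sigma)\geq 0$, where the scalar evaluation uses \eqref{eqn:tildeD:def} on one-dimensional operators and the final inequality uses $\trace(\sigma)\leq\trace(\rho)=1$. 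The main obstacle throughout is the partial-trace monotonicity underlying (a); everything else is bookkeeping with the functional calculus and the two limiting formulas.
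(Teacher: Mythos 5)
Your proposal is correct, but it is organized differently from the paper's proof, which is purely citation-based: the paper obtains (a) from \cite[Theorem~1]{10106314906367} (the $\alpha$--$z$ divergence with $z=\alpha$, $\alpha\geq 1$) and both (b) and (c) directly from \cite[Theorem~2]{muller2013quantum}. You instead prove (c) by an elementary functional-calculus computation, cite the literature only for the genuinely hard analytic core of (a) (monotonicity under partial trace, packaged via Stinespring), and then \emph{derive} (b) from (a): feasibility of $\mu=1$ plus monotonicity of $\widetilde{D}_\alpha$ in $\alpha$ for $\rho\preceq\sigma$, and the discard channel $\mathcal{E}=\trace$ giving $\widetilde{D}_\alpha(\rho\|\sigma)\geq -\log\trace(\sigma)\geq 0$ for $\rho\succeq\sigma$ (using $\trace(\sigma)\leq\trace(\rho)=1$). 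Both halves of your (b) argument check out, including the scalar evaluation $\widetilde{D}_\alpha(1\|t)=-\log t$. What your route buys is self-containedness and a clear separation of the one deep ingredient from the bookkeeping; what it costs is the extra imported fact that $\widetilde{D}_\alpha$ is non-decreasing in $\alpha$ (which the paper does use elsewhere, via \cite[Theorem~7]{muller2013quantum}, so this is unobjectionable). One small point worth stating explicitly: when $\rho\succeq\sigma$ but $\rho\NOTll\sigma$, the divergence is $+\infty$ by convention for $\alpha\geq 1$, so the order axiom holds trivially there and your DPI argument is only needed on the supported case.
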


\begin{proof}
    The data-processing inequality follows by setting $\alpha\geq 1$ and $z=\alpha$~\cite[Theorem~1]{10106314906367}. The order axiom and unitary invariance follow from~\cite[Theorem~2]{muller2013quantum}.
\end{proof}

The quantum $\alpha$-mutual information can be defined by expanding~\eqref{eqn:Sibson_info} to quantum states as
\begin{align*}
    I_\alpha(A;B)_{\rho_{AB}}:=\inf_{\sigma_B} D_\alpha(\rho_{AB}\|\rho_A\otimes\sigma_B),
\end{align*}
where $\rho_{AB}$ denotes the bipartite quantum state in $\mathcal{S}(\mathcal{H}_A\times\mathcal{H}_B)$ and $\rho_A=\trace_B(\rho_{AB})\in\mathcal{S}(\mathcal{H}_A)$. 
Similarly, we can define the sandwiched quantum $\alpha$-mutual information as
\begin{align*}
    \widetilde{I}_\alpha(A;B)_{\rho_{AB}}:= \inf_{\sigma_B} \widetilde{D}_\alpha(\rho_{AB}\|\rho_A\otimes\sigma_B).
\end{align*}

\section{Information Leakage to Arbitrary Eavesdropper} \label{sec:info_leakage}
Let discrete random variable $X\in\X$ model the classical data that must be kept private or secure. For any $x\in\X$, quantum system $A$ with state $\rho_A^x\in\mathcal{S}(\mathcal{H}_A)$ is prepared. The ensemble of the states $\mathcal{E}:=\{p_X(x),\rho_A^x\}_{x\in\X}$ captures the quantum encoding of the classical data. The average or expected density operator is given by $\rho_A=\mathbb{E}\{\rho_A^X\}=\sum_{x\in\X}p_X(x)\rho_A^x$. The expected density operator $\rho_A$ is the state of the system from the prospective of someone who does not know the realization of random variable $X$.

We assume that an arbitrary eavesdropper, who does not know the realization of the random variable $X$, wants to reliably guess or estimate the realization of a possibly randomized discrete function of the random variable $X$, denoted by random variable $Z$, based on measurements from a single copy of the quantum state of system $A$. The security analyst is not aware of the intention of the eavesdropper (i.e., the nature of the random variable $Z$ that is of interest to the eavesdropper). This setup also covers the case that eavesdropper searches for the random variable $Z$, i.e., an attack strategy, that results in the maximal information leakage~\cite{farokhiPRA}. For a given POVM $F=\{F_y\}_{y\in\Y}$, discrete random variable $Y\in\Y$ denotes the outcome of the measurement such that the probability of obtaining measurement outcome $Y=y\in\Y$ when taking a measurement on quantum state $\rho_A^x$ is $\mathbb{P}\{Y=y\,|\,X=x\}:=\trace(\rho_A^x F_y)$. The eavesdropper makes a one-shot\footnote{It was shown that number of guesses is immaterial in evaluating the maximal information leakage~\cite{farokhiPRA}.} guess of the random variable $Z$ denoted by the random variable $\widehat{Z}$. Maximal quantum leakage, originally defined and studied in~\cite{farokhiPRA}, measures the multiplicative increase in the probability of correctly guessing the realization of the random variable $Z$ based on access to the quantum encoding of the data  via ensemble $\mathcal{E}=\{p_X(x),\rho_A^x\}_{x\in\X}$.

\begin{definition}[Maximal Quantum Leakage] \label{def:qml} The maximal quantum leakage from random variable $X$ through quantum encoding of the data via ensemble $\mathcal{E}=\{p_X(x),\rho_A^x\}_{x\in\X}$ is 
\begin{subequations}\label{eqn:def_qml}
    \begin{align}
    \mathcal{Q}(X\!\rightarrow \!A)_{\rho_A}
    :=&\sup_{\{F_y\}_y}\sup_{Z,\widehat{Z}} \log \!\left(\!\frac{\mathbb{P}\{Z=\widehat{Z}\}}{\displaystyle \max_{z\in\mathbb{Z}}\mathbb{P}\{Z=z\}} \!\right)\label{eqn:qml:original_def}\\
    =&\sup_{\{F_y\}_y} I_\infty(X;Y) \\
    =&\sup_{\{F_y\}_y} \log\!\left(\sum_{y\in\mathbb{Y}} \max_{x\in\mathbb{X}} \trace(\rho_A^x F_y) \!\right)\!,
\end{align}
\end{subequations}
    where, in~\eqref{eqn:qml:original_def}, the inner supremum is taken over all random variables $Z$ and $\widehat{Z}$ with equal arbitrary finite support sets and the outer supremum is taken over all POVMs $F=\{F_y\}_{y\in\Y}$ with arbitrary finite set of outcomes $\Y$. 
\end{definition}

As noted in~\cite{farokhiPRA}, there is no explicit formula for the maximal quantum leakage and an iterative algorithm must be used to compute this quantity for various quantum encoding methods. This motivates developing upper bounds for maximal quantum leakage that are easier to compute. In this paper, we develop two upper bounds for the maximal quantum leakage and show that these novel quantities also satisfy important properties or axiom for measures of information leakage. They are however considerably simpler to compute. They can be either reformulated as a semi-definite program or possess an explicit form. 

\begin{proposition} \label{prop:upperboundqml} The maximal quantum leakage from random variable $X$ through quantum encoding of the data via ensemble $\mathcal{E}=\{p_X(x),\rho_A^x\}_{x\in\X}$ is upper bounded by
    \begin{align*}
        \mathcal{Q}(X\rightarrow A)_{\rho_A}\leq \min_{\pi\in\Delta(\X)} \max_{x\in\X}\widetilde{D}_\infty\left(\rho_A^x\Bigg\|\sum_{x\in\X}\pi(x)\rho_A^x\right).
    \end{align*}
\end{proposition}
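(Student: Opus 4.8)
The plan is to fix an arbitrary prior $\pi\in\Delta(\X)$, introduce the barycenter $\sigma_\pi:=\sum_{x\in\X}\pi(x)\rho_A^x$, and exploit the closed-form expression $\widetilde{D}_\infty(\rho\|\sigma)=\log(\inf\{\mu\in\mathbb{R}:\rho\preceq\mu\sigma\})$ recorded above. Setting $\mu^\ast:=2^{\,\max_{x\in\X}\widetilde{D}_\infty(\rho_A^x\|\sigma_\pi)}=\max_{x\in\X}\inf\{\mu:\rho_A^x\preceq\mu\sigma_\pi\}$, the definition of the order-$\infty$ divergence (together with closedness of the positive semi-definite cone, which guarantees the infimum is attained) yields the single operator inequality $\rho_A^x\preceq\mu^\ast\sigma_\pi$ holding simultaneously for every $x\in\X$. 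Restricting attention to priors $\pi$ of full support ensures $\rho_A^x\ll\sigma_\pi$ for all $x$, so that each $\widetilde{D}_\infty(\rho_A^x\|\sigma_\pi)$ is finite and the outer minimization ranges over a nonempty set of finite values.

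The key step is to convert this operator inequality into a bound on measurement probabilities. For any POVM $\{F_y\}_{y\in\Y}$ and each outcome $y$, positivity $F_y\succeq 0$ combined with the elementary fact that $\trace(PQ)\geq 0$ whenever $P,Q\succeq 0$ (applied to $P=\mu^\ast\sigma_\pi-\rho_A^x$ and $Q=F_y$) gives
\begin{align*}
\trace(\rho_A^x F_y)\leq \mu^\ast\,\trace(\sigma_\pi F_y),
\end{align*}
uniformly in $x$. Maximizing the left-hand side over $x\in\X$, summing over $y\in\Y$, and invoking the completeness relation $\sum_y F_y=I$ together with the normalization $\trace(\sigma_\pi)=\sum_{x}\pi(x)\trace(\rho_A^x)=1$ produces the telescoped bound $\sum_{y\in\Y}\max_{x\in\X}\trace(\rho_A^x F_y)\leq\mu^\ast$.

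Taking logarithms, the explicit formula for $\mathcal{Q}(X\rightarrow A)_{\rho_A}$ from Definition~\ref{def:qml} (for a fixed POVM) is bounded by $\log\mu^\ast=\max_{x\in\X}\widetilde{D}_\infty(\rho_A^x\|\sigma_\pi)$; since the right-hand side no longer depends on the POVM, the supremum over all POVMs obeys the same inequality. Finally, because this chain holds for every $\pi\in\Delta(\X)$, minimizing the right-hand side over $\pi$ yields the claimed bound. I do not anticipate a genuine obstacle here: the entire argument reduces to the positivity of $\trace(PQ)$ on the positive semi-definite cone and the POVM completeness relation. The only point requiring a little care is the finiteness/attainment discussion of the preceding paragraph, which guarantees that the outer minimization is well posed rather than vacuously $\infty$.
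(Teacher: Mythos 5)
Your proof is correct, but it takes a genuinely different and more elementary route than the paper's. The paper never touches the operator inequality directly: it starts from a bound on the classical Sibson mutual information $I_\alpha(X;Y)$ in terms of $\trace\bigl(\bigl(\rho_A^x\rho_A^{(1-\alpha)/\alpha}\bigr)^\alpha\bigr)$ imported from an external reference, absorbs a $\log(|\X|)/(\alpha-1)$ error term, sends $\alpha\to\infty$, and then needs a dedicated technical lemma (Lemma~\ref{lemma:swap}) to justify interchanging $\lim_{\alpha\to\infty}$ with $\max_{x\in\X}$. Your argument works directly at $\alpha=\infty$: from $\rho_A^x\preceq\mu^\ast\sigma_\pi$ and $F_y\succeq 0$ you get $\trace(\rho_A^xF_y)\leq\mu^\ast\trace(\sigma_\pi F_y)$, and summing over the POVM gives $\sum_{y}\max_{x}\trace(\rho_A^xF_y)\leq\mu^\ast$, which is exactly the quantity whose logarithm and supremum over POVMs define $\mathcal{Q}(X\rightarrow A)_{\rho_A}$. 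This is self-contained (no limit interchange, no external inequality, no appendix lemma) and is in fact the same device the paper later uses to prove Proposition~\ref{prop:sandwitched_upper}; what the paper's route buys in exchange is a finite-$\alpha$ bound $I_\alpha(X;Y)\leq\log(|\X|)/(\alpha-1)+\max_x\widetilde{D}_\alpha(\rho_A^x\|\rho_A)$ along the way, which your argument does not produce.

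One small point of hygiene: restricting to full-support priors is neither necessary nor quite sufficient as stated, because the minimizer of $\pi\mapsto\max_x\widetilde{D}_\infty(\rho_A^x\|\sigma_\pi)$ over $\Delta(\X)$ could in principle sit on the boundary, and a minimum over a subfamily does not a priori dominate the minimum over all of $\Delta(\X)$. The clean fix is to argue for \emph{every} $\pi\in\Delta(\X)$: if some $\rho_A^x\NOTll\sigma_\pi$ then the right-hand side is $+\infty$ by convention and the bound is vacuous, and otherwise your argument applies verbatim. This is a one-line repair, not a genuine gap.
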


\begin{proof}
For all $\alpha\geq 1/2$,  we have
\begin{align*}
	I_\alpha(X;Y)
	\leq& \frac{1}{\alpha-1}\log\left(\sum_{x\in\X} p_X(x)\trace\left( \left(\rho_A^x \rho_A^{\frac{1-\alpha}{\alpha}} \right)^\alpha \right) \right)\\
	\leq& \frac{1}{\alpha-1}\log\left(|\X|\max_{x\in\X}\trace\left( \left(\rho_A^x \rho_A^{\frac{1-\alpha}{\alpha}} \right)^\alpha \right) \right)\\
	=& \frac{\log(|\X|)}{\alpha-1}+\max_{x\in\X}\frac{1}{\alpha-1}\log\left(\trace\left( \left(\rho_A^x \rho_A^{\frac{1-\alpha}{\alpha}} \right)^\alpha \right) \right)\\
	=& \frac{\log(|\X|)}{\alpha-1}+\max_{x\in\X}\widetilde{D}_\alpha(\rho_A^x\|\rho_A)
\end{align*}
where the first inequality follows from~\cite[Eq.~(22)]{bussandri2023r}. 
Therefore,
\begin{align}
	I_\infty(X;Y)
	&=\lim_{\alpha\rightarrow \infty}I_\alpha(X;Y)\nonumber\\
	& \leq \lim_{\alpha\rightarrow \infty}\max_{x\in\X}\widetilde{D}_\alpha(\rho_A^x\|\rho_A)\nonumber\\
	& = \max_{x\in\X}\lim_{\alpha\rightarrow \infty}\widetilde{D}_\alpha(\rho_A^x\|\rho_A) \label{eqn:proof:1}\\
	& = \max_{x\in\X}\widetilde{D}_\infty(\rho_A^x\|\rho_A),\nonumber
\end{align}
where~\eqref{eqn:proof:1} follows from Lemma~\ref{lemma:swap} in Appendix~\ref{sec:tehnical_lemma} in Supplementary Material. Finally, note that $I_\infty(X;Y)$ is independent of $p_X$, therefore
\begin{align*}
	I_\infty(X;Y)
	&=\min_{p_X\in\Delta(\X)}I_\infty(X;Y)\\
	&=\min_{p_X\in\Delta(\X)} \max_{x\in\X}\widetilde{D}_\infty(\rho_A^x\|\rho_A)\\
	&=\min_{p_X\in\Delta(\X)} \max_{x\in\X}\widetilde{D}_\infty\left(\rho_A^x\Bigg\| \sum_{x'\in\X}p_X(x')\rho_A^{x'}\right).
\end{align*}
This concludes the proof. 
\end{proof}

This upper bound motivates introducing a new measure of information leakage for quantum encoding of classical data, referred to as barycentric quantum R\'{e}nyi leakage. Barycenter, a term popular in astrophysics, refers to the center of mass of two or more bodies that orbit one another. In this instance, barycenter refers to a quantum states that minimizes the worst-case distance for all quantum encoding $(\rho_A^x)_{x\in\X}$. Barycentric quantum divergences have been used in the past to measure information content of a quantum encoding, albeit when the distance cost is in a sum form~\cite{mosonyi2022geometric}.

\begin{definition}[Barycentric Quantum R\'{e}nyi Leakage] \label{def:BQRL}
    The barycentric quantum R\'{e}nyi leakage from random variable $X$ through quantum encoding of the data via ensemble $\mathcal{E}=\{p_X(x),\rho_A^x\}_{x\in\X}$ is
    \begin{align}
        \mathcal{B}(X\rightarrow A)_{\rho_A}\!:=\!\!\min_{\pi\in\Delta(\X)} \max_{x\in\X}\widetilde{D}_\infty\!\left(\rho_A^x\Bigg\|\sum_{x'\in\X}\pi(x')\rho_A^{x'}\right)\!.
    \end{align}
\end{definition}

The following corollary immediately follows from Proposition~\ref{prop:upperboundqml} and Definition~\ref{def:BQRL}. 

\begin{corollary} \label{cor:bqrl}$\mathcal{Q}(X\rightarrow A)_{\rho_A}\leq \mathcal{B}(X\rightarrow A)_{\rho_A}$.
\end{corollary}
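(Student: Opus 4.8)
The plan is to recognize that this corollary carries no analytic content of its own: it is simply a matter of matching the upper bound already proved in Proposition~\ref{prop:upperboundqml} against the quantity christened in Definition~\ref{def:BQRL}. First I would write down the inequality furnished by Proposition~\ref{prop:upperboundqml}, namely
\begin{align*}
\mathcal{Q}(X\rightarrow A)_{\rho_A}\leq \min_{\pi\in\Delta(\X)} \max_{x\in\X}\widetilde{D}_\infty\!\left(\rho_A^x\Bigg\|\sum_{x'\in\X}\pi(x')\rho_A^{x'}\right),
\end{align*}
where I have harmlessly renamed the inner summation index to $x'$ so that it no longer collides with the $x$ bound by the outer maximum (this rebinding changes nothing, since the summed-over variable is a dummy). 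The second step is to observe that the right-hand side is, verbatim, the objective and feasible set appearing in Definition~\ref{def:BQRL}: both optimize $\max_{x\in\X}\widetilde{D}_\infty(\rho_A^x\|\sum_{x'\in\X}\pi(x')\rho_A^{x'})$ over the same simplex $\pi\in\Delta(\X)$, so the two minima are literally the same number, namely $\mathcal{B}(X\rightarrow A)_{\rho_A}$.

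Substituting this identification into the proposition's bound gives $\mathcal{Q}(X\rightarrow A)_{\rho_A}\leq\mathcal{B}(X\rightarrow A)_{\rho_A}$ at once, which is the claim. There is no genuine obstacle to overcome here; all of the real work—passing from the operational definition of $\mathcal{Q}$ through $I_\infty(X;Y)$, applying the bound of~\cite{bussandri2023r}, taking the $\alpha\to\infty$ limit inside the maximum via Lemma~\ref{lemma:swap}, and exploiting the independence of $I_\infty$ from $p_X$ to introduce the minimization over $\pi$—was already discharged in the proof of Proposition~\ref{prop:upperboundqml}. The only thing I would take care to state explicitly is that the minimizer $\pi$ in the definition of $\mathcal{B}$ and the minimizer in the proposition's bound range over identical objects, so that no stray constant or change of feasible region can creep in; once that is noted, the corollary follows purely by transitivity of $\leq$.
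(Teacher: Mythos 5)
Your proposal is correct and matches the paper exactly: the paper states that the corollary ``immediately follows from Proposition~\ref{prop:upperboundqml} and Definition~\ref{def:BQRL},'' which is precisely the identification-plus-transitivity argument you give. Your remark about renaming the inner dummy index to $x'$ is a harmless (and arguably welcome) clarification of a notational collision in the proposition's statement, but it introduces no new content.
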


The barycentric quantum R\'{e}nyi leakage satisfies important properties or axioms for information leakage: positivity (the information leakage is greater than or equal to zero), independence property (the information leakage is zero if the quantum state is independent of the classical data), post-processing inequality (the information leakage can only be reduced if a quantum channel is applied to the state), and unitary invariance (the information remains constant by application of a unitary operator on the quantum state). The corresponding classical properties of positivity, independence, and post-processing have been postulated to be of utmost importance in security analysis~\cite{issa2019operational}. 

\begin{theorem} \label{tho:properties_bqrl} The following properties hold for the barycentric quantum R\'{e}nyi leakage:
\begin{itemize}
    \item[(a)] \textbf{Positivity and Independence}: $\mathcal{B}(X\rightarrow A)_{\rho_A}\geq 0$ with equality if and only if $\rho_A^x=\rho_A^{x'}$ for all $x,x'\in\X$;
    \item[(b)] \textbf{Unitary Invariance}: $\mathcal{B}(X\rightarrow A)_{U\rho_AU^\dag}=\mathcal{B}(X\rightarrow A)_{\rho_A}$ for any unitary $U$;
    \item[(c)] \textbf{Data-Processing Inequality}: $\mathcal{B}(X\rightarrow A)_{\mathcal{E}(\rho_A)}\leq \mathcal{B}(X\rightarrow A)_{\rho_A}$ for any quantum channel $\mathcal{E}$.
\end{itemize}
\end{theorem}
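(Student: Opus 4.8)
The plan is to lift the three properties of $\widetilde{D}_\infty$ recorded in Lemma~\ref{lemma:properties:sandwiched} through the $\min$--$\max$ defining $\mathcal{B}$, treating the equality case in (a) separately. Throughout I write $\sigma_\pi:=\sum_{x'\in\X}\pi(x')\rho_A^{x'}$ for the barycenter attached to $\pi\in\Delta(\X)$, which is itself a density operator. For unitary invariance (b), I would first note that replacing each $\rho_A^x$ by $U\rho_A^xU^\dag$ turns the barycenter into $U\sigma_\pi U^\dag$ by linearity of conjugation; Lemma~\ref{lemma:properties:sandwiched}(c) then gives $\widetilde{D}_\infty(U\rho_A^xU^\dag\|U\sigma_\pi U^\dag)=\widetilde{D}_\infty(\rho_A^x\|\sigma_\pi)$ for every $x$ and every $\pi$, so the inner $\max$ and outer $\min$ are unchanged. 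The data-processing inequality (c) is analogous: for a channel $\mathcal{E}$, linearity gives $\sum_{x'}\pi(x')\mathcal{E}(\rho_A^{x'})=\mathcal{E}(\sigma_\pi)$, and Lemma~\ref{lemma:properties:sandwiched}(a) yields $\widetilde{D}_\infty(\mathcal{E}(\rho_A^x)\|\mathcal{E}(\sigma_\pi))\leq\widetilde{D}_\infty(\rho_A^x\|\sigma_\pi)$ for each $x$ (trivially so when $\rho_A^x\NOTll\sigma_\pi$, since the right-hand side is then $+\infty$). Taking $\max$ over $x$ and evaluating the resulting bound at the minimizer of its right-hand side gives $\mathcal{B}(X\rightarrow A)_{\mathcal{E}(\rho_A)}\leq\mathcal{B}(X\rightarrow A)_{\rho_A}$.

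For positivity in (a) I would argue by contradiction at an arbitrary fixed $\pi$. If $\widetilde{D}_\infty(\rho_A^{x'}\|\sigma_\pi)<0$ for every $x'\in\supp(\pi)$ (in particular all finite), then by the definition of $\widetilde{D}_\infty$ there are scalars $\mu_{x'}<1$ with $\rho_A^{x'}\preceq\mu_{x'}\sigma_\pi$; weighting by $\pi(x')$ and summing yields $\sigma_\pi\preceq c\,\sigma_\pi$ with $c=\sum_{x'}\pi(x')\mu_{x'}<1$, hence $(1-c)\sigma_\pi\preceq 0$ and $\sigma_\pi\preceq 0$, contradicting $\trace(\sigma_\pi)=1$. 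Therefore $\max_{x\in\X}\widetilde{D}_\infty(\rho_A^x\|\sigma_\pi)\geq 0$ for every $\pi$, so $\mathcal{B}(X\rightarrow A)_{\rho_A}\geq 0$. The ``if'' part of the independence claim is immediate: if all $\rho_A^x$ coincide with a common state $\rho$, then $\sigma_\pi=\rho$ for every $\pi$ and $\widetilde{D}_\infty(\rho\|\rho)=\log 1=0$.

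The main obstacle is the ``only if'' direction of (a), which requires that the outer minimum be attained. I would justify attainment by reformulating the problem: after the substitution $q(x')=\mu\,\pi(x')$, the quantity $2^{\mathcal{B}(X\rightarrow A)_{\rho_A}}$ is the optimal value of the semi-definite program that minimizes $\sum_{x'}q(x')$ over $q\geq 0$ subject to $\rho_A^x\preceq\sum_{x'}q(x')\rho_A^{x'}$ for all $x$; its sublevel sets are compact, so a minimizer exists. Let $\pi^\star$ attain the minimum and set $\sigma^\star=\sigma_{\pi^\star}$. When $\mathcal{B}(X\rightarrow A)_{\rho_A}=0$, combining this with the nonnegativity of each inner term gives $\max_{x\in\X}\widetilde{D}_\infty(\rho_A^x\|\sigma^\star)=0$, so $\widetilde{D}_\infty(\rho_A^x\|\sigma^\star)\leq 0$, i.e.\ $\rho_A^x\preceq\sigma^\star$, for every $x\in\X$. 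Since $\trace(\rho_A^x)=\trace(\sigma^\star)=1$, the positive semi-definite operator $\sigma^\star-\rho_A^x$ has zero trace and hence vanishes, giving $\rho_A^x=\sigma^\star$ for all $x$. Thus all encoding states coincide, which closes the equivalence and completes the proof.
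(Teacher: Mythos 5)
Your proof is correct and follows the same overall route as the paper: parts (b) and (c) are obtained exactly as in the paper by pushing the unitary (resp.\ channel) through the barycenter by linearity and invoking Lemma~\ref{lemma:properties:sandwiched}, and positivity is the same summing-the-operator-inequalities contradiction. The one place where you genuinely diverge is the ``only if'' direction of (a), and there your treatment is actually tighter than the paper's. The paper asserts that $\mathcal{B}(X\rightarrow A)_{\rho_A}=0$ forces $\rho_A^x=\sum_{x'}\pi(x')\rho_A^{x'}$ \emph{for all} $\pi\in\Delta(\X)$, which is not what a vanishing minimum over $\pi$ gives you; it also leaves implicit both the attainment of the outer minimum and the step from $\widetilde{D}_\infty(\rho_A^x\|\sigma)\le 0$ to $\rho_A^x=\sigma$. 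You supply exactly the missing pieces: attainment via the compact-sublevel-set SDP reformulation $q(x')=\mu\,\pi(x')$ (consistent with Proposition~\ref{tho:sdp_bqrl}), then $\rho_A^x\preceq\sigma^\star$ at the optimizer, and finally the zero-trace argument to upgrade $\preceq$ to equality. This costs a little extra machinery but makes the equality characterization airtight; the paper's version is shorter but glosses over the quantifier on $\pi$ and the existence of a minimizer. Everything else in your write-up matches the paper's argument essentially line for line.
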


\begin{proof}
    We start by proving part (a) by contradiction. Assume that $\mathcal{B}(X\rightarrow A)_{\rho_A}<0$. The order axiom in Lemma~\ref{lemma:properties:sandwiched} shows that $\rho_A^x<\sum_{x'\in\X}\pi(x')\rho_A^{x'}$ for all $x\in\X$ and all $\pi\in\Delta(X)$. Multiplying both sides of this inequality with $\pi(x)$ and summing over $x\in\X$, we get
    $\sum_{x\in\X}\pi(x)\rho_A^x <\sum_{x\in\X}\pi(x) \sum_{x'\in\X}\pi(x')\rho_A^{x'}=\sum_{x'\in\X}\pi(x')\rho_A^{x'}$, where the last equality follows from $\sum_{x\in\X}\pi(x)=1$. This is a contradiction and therefore, it must be that $\mathcal{B}(X\rightarrow A)_{\rho_A}\geq 0$. 

    If $\rho_A^{x}=\rho_A^{x'}$ for all $x,x'\in\X$, we get 
    \begin{align*}
        \mathcal{B}(X\rightarrow A)_{\rho_A}
        &=\min_{\pi\in\Delta(\X)} \max_{x\in\X}\widetilde{D}_\infty\left(\rho_A^x\Bigg\|\sum_{x'\in\X}\pi(x')\rho_A^{x'}\right)\\
        &=\min_{\pi\in\Delta(\X)} \max_{x\in\X}\widetilde{D}_\infty(\rho_A^x\|\rho_A^x)\\
        &=0,
    \end{align*}
    where the second equality follows from $\sum_{x\in\X}\pi(x)=1$. On the other hand, $\mathcal{B}(X\rightarrow A)_{\rho_A}=0$ implies that, for all $x\in\X$ and $\pi\in\Delta(\X)$, $\rho_A^x =\sum_{x'\in\X}\pi(x')\rho_A^{x'}$. This is only possible if $\rho_A^{x}=\rho_A^{x'}$ for all $x,x'\in\X$. 

    Now, we can prove part (b). The proof of this part follows from that 
    \begin{align*}
        \widetilde{D}_\infty&\left(U\rho_A^xU^\dag\Bigg\| \sum_{{x'}\in\X}\pi(x') U\rho_A^{x'}U^\dag\right)\\
        &=\widetilde{D}_\infty\left(U\rho_A^xU^\dag\Bigg\| U\left(\sum_{{x'}\in\X}\pi(x') \rho_A^{x'}\right)U^\dag\right)\\
        &=\widetilde{D}_\infty\left(\rho_A^x\Bigg\| \sum_{{x'}\in\X}\pi(x') \rho_A^{x'}\right),
    \end{align*}
    where the second equality follows from unitary invariance in Lemma~\ref{lemma:properties:sandwiched}. 

    Finally, we can prove part (c). Note that
    \begin{align*}
        \widetilde{D}_\infty&\left( \mathcal{E}(\rho_A^x)\Bigg\|\sum_{x'\in\X}\pi(x')\mathcal{E}(\rho_A^{x'})\right)\\
        &=
        \widetilde{D}_\infty\left( \mathcal{E}(\rho_A^x)\Bigg\|\mathcal{E}\left(\sum_{x'\in\X}\pi(x')\rho_A^{x'}\right)\right)\\
        &\leq 
        \widetilde{D}_\infty\left( \rho_A^x\Bigg\|\sum_{x'\in\X}\pi(x')\rho_A^{x'}\right),
    \end{align*}
    where the equality stems from the linearity of $\mathcal{E}$ and the inequality follows form the data-processing inequality in Lemma~\ref{lemma:properties:sandwiched}.    
\end{proof}

\begin{proposition}[Semi-Definite Programming for Barycentric Quantum R\'{e}nyi Leakage] \label{tho:sdp_bqrl} The barycentric quantum R\'{e}nyi leakage from random variable $X$ through quantum encoding of the data via ensemble $\mathcal{E}=\{p_X(x),\rho_A^x\}_{x\in\X}$ can be computed using the semi-definite programming:
\begin{align*}
    \min_{\pi\in\Delta(\X),\mu\in\mathbb{R}} & \mu, \\
    \mathrm{s.t.}\quad \; & \rho_A^x\leq \mu \sum_{x'\in\X} \pi(x') \rho_A^{x'}, \forall x\in\X.
\end{align*}
\end{proposition}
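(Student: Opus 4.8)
The plan is to reduce the defining expression for $\mathcal{B}(X\rightarrow A)_{\rho_A}$ directly, using the closed-form identity $\widetilde{D}_\infty(\rho\|\sigma)=\log(\inf\{\mu\in\mathbb{R}:\rho\leq\mu\sigma\})$ recorded in the preliminaries, and peeling off the logarithm together with the two optimizations one at a time. First I would fix $\pi\in\Delta(\X)$ and set $\sigma_\pi:=\sum_{x'\in\X}\pi(x')\rho_A^{x'}$, which is again a density operator (a convex combination of density operators, hence positive semi-definite with unit trace). For each $x$ the definition gives $\widetilde{D}_\infty(\rho_A^x\|\sigma_\pi)=\log\mu_x^\star(\pi)$, where $\mu_x^\star(\pi):=\inf\{\mu:\rho_A^x\leq\mu\sigma_\pi\}$, and since $\log$ is monotone increasing, $\max_{x\in\X}\widetilde{D}_\infty(\rho_A^x\|\sigma_\pi)=\log(\max_{x\in\X}\mu_x^\star(\pi))$.

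The first and essentially only nontrivial step is to identify $\max_{x}\mu_x^\star(\pi)$ with the threshold of the \emph{joint} constraint, i.e. $\max_{x\in\X}\mu_x^\star(\pi)=\inf\{\mu:\rho_A^x\leq\mu\sigma_\pi \text{ for all } x\in\X\}$. This rests on the observation that, since $\sigma_\pi\succeq 0$, each set $\{\mu:\rho_A^x\leq\mu\sigma_\pi\}$ is an upward-closed ray: if $\rho_A^x\leq\mu\sigma_\pi$ and $\mu'\geq\mu$, then $\mu'\sigma_\pi-\rho_A^x=(\mu'-\mu)\sigma_\pi+(\mu\sigma_\pi-\rho_A^x)\succeq 0$. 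Consequently $\rho_A^x\leq\mu\sigma_\pi$ holds for every $x$ precisely when $\mu\geq\mu_x^\star(\pi)$ for every $x$, i.e. when $\mu\geq\max_x\mu_x^\star(\pi)$. I would also record the elementary trace bound $1=\trace(\rho_A^x)\leq\mu\,\trace(\sigma_\pi)=\mu$, obtained by taking traces in the constraint, which shows the feasible $\mu$ are bounded below by $1$ (re-confirming positivity) and that feasibility is never vacuous, e.g. any full-support $\pi$ gives $\rho_A^x\ll\sigma_\pi$ and a finite threshold.

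Finally, applying monotonicity of $\log$ once more lets me pull the outer $\min_{\pi}$ through, yielding $\mathcal{B}(X\rightarrow A)_{\rho_A}=\log\bigl(\min_{\pi\in\Delta(\X)}\inf\{\mu:\rho_A^x\leq\mu\sigma_\pi\ \forall x\}\bigr)$, and collapsing the two nested optimizations into a single joint minimization over $(\pi,\mu)$ gives exactly the displayed program, whose optimal value $\mu^\star$ satisfies $\mathcal{B}(X\rightarrow A)_{\rho_A}=\log\mu^\star$. The last point I would address is that, as written, the matrix inequality is \emph{bilinear} in $(\mu,\pi)$ through the product $\mu\,\pi(x')$, so to justify calling the problem a semi-definite program I would substitute $\tilde{\pi}(x'):=\mu\,\pi(x')\geq 0$; under this change of variables the constraint becomes the linear matrix inequality $\rho_A^x\leq\sum_{x'\in\X}\tilde{\pi}(x')\rho_A^{x'}$ and the objective becomes the linear functional $\mu=\sum_{x'\in\X}\tilde{\pi}(x')$ (using $\sum_{x'}\pi(x')=1$), so the problem is a genuine SDP over $\tilde{\pi}\in\mathbb{R}_{\geq 0}^{|\X|}$. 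I expect this reformulation, rather than the algebraic reduction, to be the only step requiring care.
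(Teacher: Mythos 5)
Your argument is correct and follows the same route as the paper, whose entire proof is the one-liner that the claim ``immediately follows from the definition of $\widetilde{D}_\infty$''; your expansion---rewriting each $\widetilde{D}_\infty$ via its threshold form $\log(\inf\{\mu:\rho\leq\mu\sigma\})$, noting that each feasible set is a closed upward ray so that the maximum of the individual thresholds equals the threshold of the joint constraint, and then pulling the logarithm through both optimizations---is exactly the content the paper omits. Your closing observation that the displayed constraint is bilinear in $(\mu,\pi)$ through the product $\mu\,\pi(x')$, and only becomes a genuine linear matrix inequality after the substitution $\tilde{\pi}(x'):=\mu\,\pi(x')\geq 0$ with linear objective $\sum_{x'}\tilde{\pi}(x')$, is a point the paper does not address and is in fact needed to justify calling the problem a semi-definite program.
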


\begin{proof}
    The proof immediately follows from the definition of $\widetilde{D}_\infty$.  
\end{proof}

Theorem~\ref{tho:sdp_bqrl} shows that, although the barycentric quantum R\'{e}nyi leakage still does not posses an explicit formula, its computation is far simpler than the maximal quantum leakage (c.f.,~\cite{farokhiPRA}).

\begin{example}[Basis/Index Encoding] Consider random variable $X$ with support set $\X=\{1,\dots,2^n\}$ for some positive integer $n$. Assume that a basis or index encoding strategy is used, that is, $\rho_A^x=\ket{x}\bra{x}$ for all $x\in\X$. In~\cite{farokhiPRA}, it was shown that $\mathcal{Q}(X\rightarrow A)_{\rho_A}=n$. To compute the barycentric quantum R\'{e}nyi leakage, we can use Theorem~\ref{tho:sdp_bqrl} to demonstrate that $
    \mathcal{B}(X\rightarrow A)_{\rho_A}=\min_{\pi\in\Delta(\X)}\max_{x\in\X} ({1}/{\pi(x)})=n.$
This demonstrates that the upper bound in Corollary~\ref{cor:bqrl} can be tight in some cases. 
\end{example}

In what follows, we further simplify the upper bound in Proposition~\ref{prop:upperboundqml} to drive a simpler measure of information leakage with an explicit form. 

\begin{proposition} \label{prop:brl_prl} The barycentric quantum R\'{e}nyi leakage from random variable $X$ through quantum encoding of the data via ensemble $\mathcal{E}=\{p_X(x),\rho_A^x\}_{x\in\X}$ is upper bounded by
    \begin{align*}
        \mathcal{B}(X\rightarrow A)_{\rho_A}
        \leq \max_{x,x'\in\X} \widetilde{D}_\infty(\rho_A^x\| \rho_A^{x'}).
    \end{align*}
\end{proposition}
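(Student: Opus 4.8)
The plan is to bound the minimization over $\pi \in \Delta(\X)$ by evaluating the objective at a particular convenient choice of $\pi$, thereby collapsing the barycenter into a single state and reducing the expression to a pairwise comparison. Since $\mathcal{B}(X\rightarrow A)_{\rho_A}$ is a minimum over all $\pi$, any specific feasible $\pi$ yields an upper bound, so the freedom to choose $\pi$ is the central tool.

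First I would fix an arbitrary index $x^\star \in \X$ and choose the Dirac measure $\pi = \delta_{x^\star}$, i.e. $\pi(x') = 1$ if $x' = x^\star$ and $0$ otherwise. Then the barycenter degenerates: $\sum_{x'\in\X}\pi(x')\rho_A^{x'} = \rho_A^{x^\star}$. Substituting into the definition of $\mathcal{B}$ and using that the minimum is dominated by this single choice gives
\begin{align*}
\mathcal{B}(X\rightarrow A)_{\rho_A}
&= \min_{\pi\in\Delta(\X)} \max_{x\in\X}\widetilde{D}_\infty\!\left(\rho_A^x\Bigg\|\sum_{x'\in\X}\pi(x')\rho_A^{x'}\right)\\
&\leq \max_{x\in\X}\widetilde{D}_\infty\!\left(\rho_A^x\big\|\rho_A^{x^\star}\right).
\end{align*}
The next step is to observe that this inequality holds for \emph{every} $x^\star \in \X$, so in particular the left-hand side is bounded by the best such choice; but more directly, taking the maximum over $x^\star$ on the right only weakens the bound, and replacing the fixed $x^\star$ by a free index $x'$ ranging over $\X$ yields $\max_{x\in\X}\widetilde{D}_\infty(\rho_A^x\|\rho_A^{x'}) \leq \max_{x,x'\in\X}\widetilde{D}_\infty(\rho_A^x\|\rho_A^{x'})$, which is exactly the claimed right-hand side.

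The one subtlety I would check carefully is the support condition $\rho_A^x \ll \rho_A^{x'}$ required for $\widetilde{D}_\infty(\rho_A^x\|\rho_A^{x'})$ to be finite and for the definition to apply cleanly. When $\pi = \delta_{x^\star}$, the barycenter $\rho_A^{x^\star}$ may have a strictly smaller support than the full average $\rho_A = \sum_x p_X(x)\rho_A^x$, so for some $x$ the divergence $\widetilde{D}_\infty(\rho_A^x\|\rho_A^{x^\star})$ could be $+\infty$. I expect this to be the main (though mild) obstacle: the bound is vacuously true whenever the right-hand side is $+\infty$, so the inequality still holds, but I would note explicitly that the stated bound is only informative when the states share a common support structure, and invoke the convention from the preliminaries that $\widetilde{D}_\infty(\rho\|\sigma)=\infty$ when $\rho \NOTll \sigma$. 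With that convention in place the chain of inequalities above is valid unconditionally, completing the proof.
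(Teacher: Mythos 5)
Your proof is correct, but it takes a genuinely different (and simpler) route than the paper. You upper-bound the minimum over $\pi\in\Delta(\X)$ by evaluating at a Dirac measure $\pi=\delta_{x^\star}$, which collapses the barycenter to a single state and immediately yields $\mathcal{B}(X\rightarrow A)_{\rho_A}\leq\max_{x\in\X}\widetilde{D}_\infty(\rho_A^x\|\rho_A^{x^\star})\leq\max_{x,x'\in\X}\widetilde{D}_\infty(\rho_A^x\|\rho_A^{x'})$. The paper instead proves the pointwise bound
\begin{align*}
\widetilde{D}_\infty\left(\rho_A^x\Bigg\|\sum_{x'\in\X}\pi(x')\rho_A^{x'}\right)\leq\max_{x'\in\X}\widetilde{D}_\infty\left(\rho_A^x\big\|\rho_A^{x'}\right)
\end{align*}
for \emph{every} mixture $\pi$, using the characterization $\widetilde{D}_\infty(\rho\|\sigma)=\log\left(\inf\{\mu\in\mathbb{R}:\rho\leq\mu\sigma\}\right)$ and the fact that any $\mu$ feasible for all pairwise constraints $\rho_A^x\leq\mu\rho_A^{x'}$ remains feasible for their convex combination; this is a quasi-convexity-type statement in the second argument of $\widetilde{D}_\infty$, of which your Dirac-measure choice is the trivial special case. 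Both arguments deliver exactly the proposition; the paper's version carries the slightly stronger intermediate fact that the pairwise maximum dominates the objective uniformly in $\pi$ (not only at some particular $\pi$), though that extra strength is not used elsewhere, while your version is shorter and avoids the operator-inequality manipulation entirely. Your handling of the support issue is also right: with the stated convention that $\widetilde{D}_\infty(\rho\|\sigma)=\infty$ when $\rho\NOTll\sigma$, the chain of inequalities holds unconditionally and the bound is simply vacuous when the supports are incompatible (as the paper's own basis-encoding example illustrates).
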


\begin{proof}
   Note that
   \begin{align*}
   	\widetilde{D}_\infty&\left(\rho_A^x\Bigg\| \sum_{x'\in\X}\pi(x')\rho_A^{x'}\right)\\
   	&=\log\left(\inf\left\{\mu\in\mathbb{R}:\rho_A^x\leq \mu \sum_{x'\in\X}\pi(x')\rho_A^{x'}\right\}\right)\\
   	&\leq\max_{x'\in\X}\log(\inf \{\mu\in\mathbb{R}:\rho_A^x\leq \mu \rho_A^{x'}\})\\
   	&=\max_{x'\in\X} \widetilde{D}_\infty(\rho_A^x\| \rho_A^{x'}),
   \end{align*}
   where the inequality follows from 
   \begin{align*}
   	&\left\{\mu\in\mathbb{R}:\rho_A^x\leq \mu \sum_{x'\in\X}\pi(x')\rho_A^{x'}\right\}\\
   	&\hspace{.4in}=\left\{\mu\in\mathbb{R}: \sum_{x'\in\X}\pi(x')\rho_A^x \leq \mu \sum_{x'\in\X}\pi(x')\rho_A^{x'}\right\}
   	\\
   	&\hspace{.4in}=\left\{\mu\in\mathbb{R}: \sum_{x'\in\X}\pi(x')(\rho_A^x -\mu\rho_A^{x'} )\leq 0\right\}\\
   	&\hspace{.4in}\subseteq\left\{\mu\in\mathbb{R}: \rho_A^x -\mu\rho_A^{x'}\leq 0,\forall x'\in\X\right\}\\
   	&\hspace{.4in}=\bigcap_{x'\in\X}\left\{\mu\in\mathbb{R}: \rho_A^x -\mu\rho_A^{x'}\leq 0\right\}.
   \end{align*}
   This concludes the proof.
\end{proof}

This upper bound motivates introducing another measure of information leakage for quantum encoding of classical data, referred to as pairwise quantum R\'{e}nyi leakage. This notion of information leakage possesses an explicit formula and is thus easily computable; however, as we demonstrate later, is more conservative. 

\begin{definition}[Pairwise Quantum R\'{e}nyi Leakage] \label{def:pqrl}
    The pairwise quantum R\'{e}nyi leakage from random variable $X$ through quantum encoding of the data via ensemble $\mathcal{E}=\{p_X(x),\rho_A^x\}_{x\in\X}$ is
    \begin{align}
        \mathcal{R}(X\rightarrow A)_{\rho_A}:=\max_{x,x'\in\X}\widetilde{D}_\infty(\rho_A^x\|\rho_A^{x'}).
    \end{align}
\end{definition}

The following corollary immediately follows from Corollary~\ref{cor:bqrl}, Proposition~\ref{prop:brl_prl}, and Definition~\ref{def:pqrl}.

\begin{corollary} \label{cor:pqrl}$\mathcal{Q}(X\rightarrow A)_{\rho_A}\leq \mathcal{B}(X\rightarrow A)_{\rho_A}\leq \mathcal{R}(X\rightarrow A)_{\rho_A}$.
\end{corollary}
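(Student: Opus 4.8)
The plan is to establish the chain of two inequalities by simply composing results that have already been proven, since this corollary is a direct transitivity argument rather than a substantive new claim. I would treat the two inequalities separately and then concatenate them.

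First, for the left inequality $\mathcal{Q}(X\rightarrow A)_{\rho_A}\leq \mathcal{B}(X\rightarrow A)_{\rho_A}$, I would invoke Corollary~\ref{cor:bqrl} verbatim; there is nothing further to do, as that corollary was itself derived from Proposition~\ref{prop:upperboundqml} together with the defining expression for $\mathcal{B}$ in Definition~\ref{def:BQRL}. Second, for the right inequality, I would combine Proposition~\ref{prop:brl_prl}, which gives
\begin{align*}
    \mathcal{B}(X\rightarrow A)_{\rho_A}\leq \max_{x,x'\in\X} \widetilde{D}_\infty(\rho_A^x\| \rho_A^{x'}),
\end{align*}
with Definition~\ref{def:pqrl}, which identifies the right-hand side above as exactly $\mathcal{R}(X\rightarrow A)_{\rho_A}$. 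Substituting the definition into the proposition's bound yields $\mathcal{B}(X\rightarrow A)_{\rho_A}\leq \mathcal{R}(X\rightarrow A)_{\rho_A}$.

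Finally, I would chain the two bounds by transitivity of $\leq$ to obtain the stated triple inequality. I do not anticipate any genuine obstacle here: each step is a one-line citation of an earlier result, and the only care required is to confirm that the three cited statements are all phrased for the \emph{same} ensemble $\mathcal{E}=\{p_X(x),\rho_A^x\}_{x\in\X}$ and the same quantity $\widetilde{D}_\infty$, so that the intermediate expression $\max_{x,x'\in\X}\widetilde{D}_\infty(\rho_A^x\|\rho_A^{x'})$ matches precisely between Proposition~\ref{prop:brl_prl} and Definition~\ref{def:pqrl}. Since it does, the proof reduces to a single display concatenating the two inequalities, and a one-sentence justification noting that it follows immediately from Corollary~\ref{cor:bqrl}, Proposition~\ref{prop:brl_prl}, and Definition~\ref{def:pqrl}.
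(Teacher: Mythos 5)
Your proposal is correct and matches the paper's own justification exactly: the paper states that the corollary follows immediately from Corollary~\ref{cor:bqrl}, Proposition~\ref{prop:brl_prl}, and Definition~\ref{def:pqrl}, which is precisely the two-step concatenation you describe. No gaps; nothing further is needed.
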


Similarly, the pairwise quantum R\'{e}nyi leakage satisfies important properties of positivity, independence, post-processing inequality, and unitary invariance.  These properties are established in the following theorem.

\begin{theorem} \label{tho:pqrl_properties} The following properties hold for the pairwise quantum R\'{e}nyi leakage:
\begin{itemize}
    \item[(a)] \textbf{Positivity and Independence}: $\mathcal{R}(X\rightarrow A)_{\rho_A}\geq 0$ with equality if and only if $\rho_A^x=\rho_A^{x'}$ for all $x,x'\in\X$;
    \item[(b)] \textbf{Unitary Invariance}: $\mathcal{R}(X\rightarrow A)_{U\rho_AU^\dag}=\mathcal{R}(X\rightarrow A)_{\rho_A}$ for any unitary $U$;
    \item[(c)] \textbf{Data-Processing Inequality}: $\mathcal{R}(X\rightarrow A)_{\mathcal{E}(\rho_A)}\leq \mathcal{R}(X\rightarrow A)_{\rho_A}$ for any quantum channel $\mathcal{E}$.
\end{itemize}
\end{theorem}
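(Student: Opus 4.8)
The plan is to derive all three properties directly from the corresponding properties of $\widetilde{D}_\infty$ collected in Lemma~\ref{lemma:properties:sandwiched}, exploiting the fact that $\mathcal{R}$ is merely a maximum of pairwise sandwiched divergences rather than a minimax over a probability simplex. This makes the argument more transparent than the barycentric case in Theorem~\ref{tho:properties_bqrl}, since no averaging or linearity trick is needed.

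For part (a), I would first establish positivity by observing that the maximum defining $\mathcal{R}(X\rightarrow A)_{\rho_A}$ ranges over all pairs $(x,x')$, including the diagonal pairs with $x=x'$. Since $\rho_A^x\preceq\rho_A^x$ and $\rho_A^x\succeq\rho_A^x$ simultaneously, the order axiom in Lemma~\ref{lemma:properties:sandwiched} forces $\widetilde{D}_\infty(\rho_A^x\|\rho_A^x)=0$, so the maximum is at least $0$. For the equality characterization, the forward implication is immediate: if $\rho_A^x=\rho_A^{x'}$ for all $x,x'$, then every term in the maximum equals $\widetilde{D}_\infty(\rho_A^x\|\rho_A^x)=0$. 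For the converse I would use the explicit formula $\widetilde{D}_\infty(\rho\|\sigma)=\log(\inf\{\mu\in\mathbb{R}:\rho\leq\mu\sigma\})$: if $\mathcal{R}(X\rightarrow A)_{\rho_A}=0$, then every pairwise term is nonpositive, hence $\inf\{\mu:\rho_A^x\leq\mu\rho_A^{x'}\}\leq 1$, which gives $\rho_A^x\preceq\rho_A^{x'}$ for every ordered pair $(x,x')$. Swapping the roles of $x$ and $x'$ yields $\rho_A^{x'}\preceq\rho_A^x$ as well, and antisymmetry of the L\"owner partial order forces $\rho_A^x=\rho_A^{x'}$.

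Part (b) follows by applying unitary invariance from Lemma~\ref{lemma:properties:sandwiched} term by term. Replacing each $\rho_A^x$ by $U\rho_A^xU^\dag$ gives $\widetilde{D}_\infty(U\rho_A^xU^\dag\|U\rho_A^{x'}U^\dag)=\widetilde{D}_\infty(\rho_A^x\|\rho_A^{x'})$ for every pair, so the maximum over pairs is unchanged. Similarly, part (c) follows from the data-processing inequality in Lemma~\ref{lemma:properties:sandwiched}: for each pair, $\widetilde{D}_\infty(\mathcal{E}(\rho_A^x)\|\mathcal{E}(\rho_A^{x'}))\leq\widetilde{D}_\infty(\rho_A^x\|\rho_A^{x'})$, and since a maximum of one finite collection that is dominated termwise by another is itself dominated, we conclude $\mathcal{R}(X\rightarrow A)_{\mathcal{E}(\rho_A)}\leq\mathcal{R}(X\rightarrow A)_{\rho_A}$.

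None of the steps presents a genuine obstacle; the only point requiring care is the converse in the equality condition of part (a), where I must invoke the explicit $\widetilde{D}_\infty$ formula rather than merely the order axiom, and then use antisymmetry of the positive-semidefinite ordering to upgrade the two one-sided operator inequalities to an equality. The remaining parts reduce to monotonicity of the maximum under termwise inequalities and therefore require essentially no computation beyond citing Lemma~\ref{lemma:properties:sandwiched}.
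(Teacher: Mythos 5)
Your proof is correct and follows essentially the same route as the paper: positivity via the diagonal pair, and parts (b) and (c) by termwise application of the unitary invariance and data-processing properties of $\widetilde{D}_\infty$ from Lemma~\ref{lemma:properties:sandwiched}. Your treatment of the converse in part (a) --- extracting $\rho_A^x\preceq\rho_A^{x'}$ from the explicit $\widetilde{D}_\infty$ formula and invoking antisymmetry of the L\"owner order --- is in fact more explicit than the paper, which simply asserts that vanishing of all pairwise divergences forces equality of the states.
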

 
\begin{proof} 
We start by proving part (a). Note that $\mathcal{R}(X\rightarrow A)_{\rho_A}= \max_{x,x'\in\X}\widetilde{D}_\infty(\rho_A^x\|\rho_A^{x'}) \geq\widetilde{D}_\infty\left(\rho_A^x\|\rho_A^x\right)=0$. If $\rho_A^{x}=\rho_A^{x'}$ for all $x,x'\in\X$, $\widetilde{D}_\infty(\rho_A^x\|\rho_A^{x'})=0$ for all $x',x\in\X$. Therefore, $\mathcal{R}(X\rightarrow A)_{\rho_A}=0$. On the other hand, if $\mathcal{R}(X\rightarrow A)_{\rho_A}=0$, it means that, for all $x,x'\in\X$, $\widetilde{D}_\infty(\rho_A^x\|\rho_A^{x'})=0$, which is only possible if $\rho_A^{x}=\rho_A^{x'}$. 

The proof for part (b) is similar to the proof of part (b) of Theorem~\ref{tho:properties_bqrl} because Lemma~\ref{lemma:properties:sandwiched} results in $\widetilde{D}_\infty(U\rho_A^xU^\dag\|U\rho_A^{x'}U^\dag) =\widetilde{D}_\infty(\rho_A^x\|\rho_A^{x'})$ for all $x,x'\in\X$. 

The proof of part (c) follows from that $\widetilde{D}_\infty( \mathcal{E}(\rho_A^x)\|\mathcal{E}(\rho_A^{x'}))
\leq 
\widetilde{D}_\infty( \rho_A^x\| \rho_A^{x'})$,
where the inequality is a consequence of Lemma~\ref{lemma:properties:sandwiched}.
\end{proof}

\setcounter{example}{0}
\begin{example}[Basis/Index Encoding (Cont'd)] Note that, because $\rho_A^x\NOTll \rho_A^{x'}$ if $x\neq x'$, we can easily see that $\mathcal{R}(X\rightarrow A)_{\rho_A}=\max_{x,x'\in\X}\widetilde{D}_\infty(\rho_A^x\|\rho_A^{x'})=\infty$. Therefore, the upper bound relating to $\mathcal{R}(X\rightarrow A)_{\rho_A}$ in Corollary~\ref{cor:pqrl} can be loose in this instance. This measure of information leakage can be conservative in general.
\end{example}

We finish this section by investigating the relationship between the barycentric and pairwise quantum R\'{e}nyi leakage with the sandwiched quantum $\alpha$-mutual information and accessible information. Consider the following classical-quantum state representing the ensemble $\mathcal{E}=\{p_X(x),\rho_A^x\}_{x\in\X}$:
\begin{align*}
    \rho_{XA}=\sum_{x\in\X} p_X(x)\ket{x}\bra{x}\otimes\rho^x_A.
\end{align*}
Recall that the sandwiched quantum $\alpha$-mutual information of the classical-quantum state $\rho_{XA}$ is given by  $
    \widetilde{I}_\alpha(X;A)_{\rho_{XA}}=\inf_{\sigma} \widetilde{D}_\alpha(\rho_{XA}\|\rho_X\otimes\sigma).$

\begin{proposition} \label{prop:sandwitched_upper} The sandwiched quantum $\alpha$-mutual information of the classical-quantum state $\rho_{XA}$ is upper bounded by
    \begin{align*}
    \widetilde{I}_\alpha(X;A)_{\rho_{XA}}\leq \widetilde{I}_\infty(X;A)_{\rho_{XA}}
    &\leq \mathcal{B}(X\rightarrow A)_{\rho_A}\\
    &\leq\mathcal{R}(X\rightarrow A)_{\rho_A}.
    \end{align*}
\end{proposition}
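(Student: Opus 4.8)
The plan is to establish the chain one link at a time. The rightmost inequality $\mathcal{B}(X\rightarrow A)_{\rho_A}\leq\mathcal{R}(X\rightarrow A)_{\rho_A}$ is already recorded in Corollary~\ref{cor:pqrl}, so the work reduces to the leftmost inequality $\widetilde{I}_\alpha(X;A)_{\rho_{XA}}\leq\widetilde{I}_\infty(X;A)_{\rho_{XA}}$ and the middle inequality $\widetilde{I}_\infty(X;A)_{\rho_{XA}}\leq\mathcal{B}(X\rightarrow A)_{\rho_A}$.

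For the leftmost inequality I would invoke monotonicity of the sandwiched quantum R\'{e}nyi relative entropy in its order, i.e.\ that $\alpha\mapsto\widetilde{D}_\alpha(\rho\|\sigma)$ is nondecreasing for fixed $\rho,\sigma$, which is established in~\cite{muller2013quantum}. Letting $\sigma_\star$ denote a minimizer of $\sigma\mapsto\widetilde{D}_\infty(\rho_{XA}\|\rho_X\otimes\sigma)$ (the infimum is attained since the density operators form a compact set and the divergence is lower semicontinuous), I would chain
\begin{align*}
\widetilde{I}_\alpha(X;A)_{\rho_{XA}}
&=\inf_\sigma\widetilde{D}_\alpha(\rho_{XA}\|\rho_X\otimes\sigma)\\
&\leq\widetilde{D}_\alpha(\rho_{XA}\|\rho_X\otimes\sigma_\star)\\
&\leq\widetilde{D}_\infty(\rho_{XA}\|\rho_X\otimes\sigma_\star)
=\widetilde{I}_\infty(X;A)_{\rho_{XA}},
\end{align*}
where the first inequality holds because an infimum is dominated by any feasible evaluation and the second is order monotonicity applied to the fixed pair $(\rho_{XA},\rho_X\otimes\sigma_\star)$.

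The middle inequality is where the main work lies, and the key is to evaluate $\widetilde{D}_\infty$ on the classical-quantum pair. Since $\rho_{XA}=\sum_{x\in\X}p_X(x)\ket{x}\bra{x}\otimes\rho_A^x$ and $\rho_X\otimes\sigma=\sum_{x\in\X}p_X(x)\ket{x}\bra{x}\otimes\sigma$ are simultaneously block-diagonal in the orthonormal basis $\{\ket{x}\}$, the operator inequality $\rho_{XA}\leq\mu\,(\rho_X\otimes\sigma)$ decouples blockwise into $p_X(x)\rho_A^x\leq\mu\,p_X(x)\sigma$, i.e.\ into $\rho_A^x\leq\mu\sigma$ for every $x\in\X$ (using $p_X(x)>0$). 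Hence the feasible set of $\mu$ is $\bigcap_{x\in\X}\{\mu:\rho_A^x\leq\mu\sigma\}$, whose infimum is the maximum of the per-block infima, yielding
\begin{align*}
\widetilde{D}_\infty(\rho_{XA}\|\rho_X\otimes\sigma)=\max_{x\in\X}\widetilde{D}_\infty(\rho_A^x\|\sigma),
\end{align*}
and taking the infimum over density operators $\sigma$ then gives $\widetilde{I}_\infty(X;A)_{\rho_{XA}}=\inf_{\sigma\in\mathcal{S}(\mathcal{H}_A)}\max_{x\in\X}\widetilde{D}_\infty(\rho_A^x\|\sigma)$.

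Finally I would compare this with the definition of $\mathcal{B}$. Writing $\sigma_\pi:=\sum_{x'\in\X}\pi(x')\rho_A^{x'}$, each $\sigma_\pi$ is itself a density operator, so the barycenter optimization ranges over the convex hull of $\{\rho_A^x\}_{x\in\X}$, a subset of $\mathcal{S}(\mathcal{H}_A)$; minimizing the same objective over the smaller set can only increase the value, so
\begin{align*}
\widetilde{I}_\infty(X;A)_{\rho_{XA}}
&=\inf_{\sigma\in\mathcal{S}(\mathcal{H}_A)}\max_{x\in\X}\widetilde{D}_\infty(\rho_A^x\|\sigma)\\
&\leq\min_{\pi\in\Delta(\X)}\max_{x\in\X}\widetilde{D}_\infty(\rho_A^x\|\sigma_\pi)\\
&=\mathcal{B}(X\rightarrow A)_{\rho_A},
\end{align*}
completing the chain. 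The obstacle I anticipate is the block-diagonal evaluation of $\widetilde{D}_\infty$ on the classical-quantum state, in particular justifying that the infimum over $\mu$ distributes as a maximum over the $x$-blocks; once that identity is in hand, the remaining steps are set-inclusion and monotonicity arguments.
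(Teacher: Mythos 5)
Your proposal is correct and follows essentially the same route as the paper: monotonicity of $\widetilde{D}_\alpha$ in $\alpha$ for the first inequality, blockwise analysis of the operator inequality $\rho_{XA}\leq\mu\,(\rho_X\otimes\sigma)$ for the classical-quantum state, and restriction of the infimum over $\sigma$ to the barycenters $\sigma_\pi$ for the last step. The only (harmless) difference is that you establish the exact identity $\widetilde{D}_\infty(\rho_{XA}\|\rho_X\otimes\sigma)=\max_{x\in\X}\widetilde{D}_\infty(\rho_A^x\|\sigma)$, whereas the paper only records the one-sided set inclusion giving ``$\leq$'', which is all that is needed.
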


\begin{proof}
Firstly, $\widetilde{I}_\alpha(X;A)_{\rho_{XA}}\leq \widetilde{I}_\infty(X;A)_{\rho_{XA}}$ is a direct consequence of Theorem~7 in~\cite{muller2013quantum}. 
Note that 
\begin{align*}
	\widetilde{D}_\infty&(\rho_{XA}\|\rho_X\otimes\sigma)\\
	&=\log\left(\inf \left\{\mu\in\mathbb{R}: \rho_{XA}\leq \mu (\rho_X\otimes\sigma )\right\} \right)\\
	&\leq\log\left(\inf \left\{\mu\in\mathbb{R}: \mu\sigma-\rho^x_A\geq 0, \forall x\in\X\right\} \right)\\
	&=\max_{x\in\X}\log\left(\inf \left\{\mu\in\mathbb{R}: \mu\sigma-\rho^x_A\geq 0\right\} \right)\\
	&=\max_{x\in\X}\widetilde{D}_\infty (\rho^x_A \|\sigma),
\end{align*}
where the inequality follows from that
\begin{align*}
	&\left\{\mu\in\mathbb{R}: \rho_{XA}\leq \mu \rho_X\otimes\sigma \right\}\\
	&\hspace{.4in}=
	\Bigg\{\mu\in\mathbb{R}: \sum_{x\in\X} p_X(x)\ket{x}\bra{x}\otimes\rho^x_A\\
	&\hspace{1.3in}\leq \mu \sum_{x\in\X} p_X(x)\ket{x}\bra{x}\otimes\sigma \Bigg\}
	\\&\hspace{.4in}=
	\left\{\mu\in\mathbb{R}: \sum_{x\in\X} p_X(x)\ket{x}\bra{x}\otimes(\mu\sigma\!-\!\rho^x_A)\geq 0 \right\}
	\\&\hspace{.4in}\supseteq
	\left\{\mu\in\mathbb{R}: \mu\sigma-\rho^x_A\geq 0,\quad \forall x\in\X \right\}.
\end{align*}
This implies that 
\begin{align*}
	\widetilde{I}_\infty(X;A)_{\rho_{XA}}
	=&\inf_{\sigma} \widetilde{D}_\infty(\rho_{XA}\|\rho_X\otimes\sigma)\\
	\leq & \inf_{\sigma}\max_{x\in\X}\widetilde{D}_\infty(\rho^x_A \|\sigma)\\
	\leq&\inf_{\pi\in\Delta(\X)} \max_{x\in\X}\widetilde{D}_\infty\left(\rho^x_A \Bigg\|\sum_{x'\in\X}\pi(x')\rho^{x'}_A \right).
\end{align*}
This concludes the proof.
\end{proof}

An important notion of information in quantum information theory is accessible information~\cite[p.\,298]{wilde2013quantum}. For ensemble $\mathcal{E}=\{p_X(x),\rho_A^x\}_{x\in\X}$, the accessible information is 
\begin{align*}
    I_{acc}(\mathcal{E}):=\sup_{\{F_y\}_y} I_1(X;Y).
\end{align*}
In addition, the Holevo's information~\cite{holevo1973bounds} (also see~\cite[p.\,318]{wilde2013quantum}) is 
\begin{align*}
    \chi(\mathcal{E}):=I_1(X;A)_{\rho_A}=H(\rho_A)-\sum_{x\in\X}p_X(x)H(\rho_A^x),
\end{align*}
where $H(\rho)=-\trace(\rho\log(\rho))$ is the von Neumann quantum entropy. The next proposition provides a relationship between accessible information, and the barycentric and pairwise quantum R\'{e}nyi leakage.

\begin{proposition} \label{prop:accessible}
    $I_{acc}(\mathcal{E})\leq \chi(\mathcal{E})\leq \mathcal{B}(X\rightarrow A)_\rho\leq \mathcal{R}(X\rightarrow A)_\rho$.
\end{proposition}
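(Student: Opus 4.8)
The plan is to break the four-term chain into its three constituent inequalities, each of which either coincides with or follows quickly from results already established in the excerpt.

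The rightmost inequality $\mathcal{B}(X\rightarrow A)_{\rho_A}\leq \mathcal{R}(X\rightarrow A)_{\rho_A}$ is already in hand: it is exactly Corollary~\ref{cor:pqrl} (equivalently, Proposition~\ref{prop:brl_prl} together with Definition~\ref{def:pqrl}), so I would simply invoke it and move on.

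For the middle inequality $\chi(\mathcal{E})\leq \mathcal{B}(X\rightarrow A)_{\rho_A}$, the key step is to recognise the Holevo quantity as the sandwiched $1$-mutual information of $\rho_{XA}$. Because $\widetilde{D}_1$ and $D_1$ both reduce to the Umegaki relative entropy, the two variational problems defining $\widetilde{I}_1$ and $I_1$ are literally identical, so $\chi(\mathcal{E})=I_1(X;A)_{\rho_A}=\widetilde{I}_1(X;A)_{\rho_{XA}}$. (If one instead prefers to start from the entropy formula for $\chi$, the same identity follows from the additivity decomposition $D_1(\rho_{XA}\|\rho_X\otimes\sigma)=I_1(X;A)_{\rho_A}+D_1(\rho_A\|\sigma)$, whose second summand is non-negative and vanishes exactly at $\sigma=\rho_A$, showing that the infimum over $\sigma$ is attained there.) With this identification, the desired bound is precisely the $\alpha=1$ instance of Proposition~\ref{prop:sandwitched_upper}, which yields $\widetilde{I}_1(X;A)_{\rho_{XA}}\leq \widetilde{I}_\infty(X;A)_{\rho_{XA}}\leq \mathcal{B}(X\rightarrow A)_{\rho_A}$.

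The leftmost inequality $I_{acc}(\mathcal{E})\leq \chi(\mathcal{E})$ is the Holevo bound. I would cite it directly from \cite{wilde2013quantum}, or reprove it in one line: each POVM $\{F_y\}_y$ realises a quantum channel sending $A$ to the classical outcome register $Y$, so the data-processing inequality for quantum mutual information gives $I_1(X;Y)\leq I_1(X;A)_{\rho_A}=\chi(\mathcal{E})$, and taking the supremum over POVMs replaces the left-hand side by $I_{acc}(\mathcal{E})$. I do not expect a genuine obstacle here: the proposition is essentially a bookkeeping argument chaining together Proposition~\ref{prop:sandwitched_upper}, Corollary~\ref{cor:pqrl}, and the Holevo bound. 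The only external ingredient is the Holevo bound itself, and the only point demanding care is the identification $\chi(\mathcal{E})=\widetilde{I}_1(X;A)_{\rho_{XA}}$; once these are secured, the four-term chain closes immediately by evaluating Proposition~\ref{prop:sandwitched_upper} at $\alpha=1$.
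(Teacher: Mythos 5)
Your proposal is correct and follows essentially the same route as the paper: the paper likewise invokes the Holevo bound for $I_{acc}(\mathcal{E})\leq \chi(\mathcal{E})$, identifies $\chi(\mathcal{E})=I_1(X;A)_{\rho_{XA}}=\widetilde{I}_1(X;A)_{\rho_{XA}}$ via the coincidence of $D_1$ and $\widetilde{D}_1$, and then chains $\widetilde{I}_1\leq\widetilde{I}_\infty\leq\mathcal{B}\leq\mathcal{R}$ through Proposition~\ref{prop:sandwitched_upper} and Corollary~\ref{cor:pqrl}. Your additivity-decomposition aside is a harmless alternative justification of the same identification.
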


\begin{proof}
First, note that the Holevo bound demonstrates that $I_{acc}(\mathcal{E})\leq I_1(X;A)_{\rho_{XA}}$~\cite{holevo1973bounds}. Furthermore, $I_1(X;A)_{\rho_{XA}}=\widetilde{I}_1(X;A)_{\rho_{XA}}$. In addition, $\widetilde{I}_1(X;A)_{\rho_{XA}}\leq \widetilde{I}_\infty(X;A)_{\rho_{XA}}$~\cite[Theorem~7]{muller2013quantum}. The rest follows from Proposition~\ref{prop:sandwitched_upper}.
\end{proof}

\section{Quantum Differential Privacy and Depolarizing Channels}
\label{sec:DP}
Differential privacy is the gold standard of data privacy analysis in the computer science literature~\cite{dwork2008differential}. Differential privacy has been recently extended to quantum computing~\cite{hirche2023quantum, zhou2017differential, aaronson2019gentle}. The aim of the differential privacy in quantum computing is to ensure than an adversary, e.g., eavesdropper, cannot distinguish between two ``similar'' datasets based on  measurements of the underlying quantum system, c.f., hypothesis-testing privacy~\cite{farokhi2023privacy}. Similarity is modelled using a neighbourhood relationship.

\begin{definition}[Neighbouring Relationship]
A neighbouring relationship (over the set of density operators) is a mathematical relation $\sim$ that is both reflective ($\rho\sim\rho$ for all density operators $\rho$) and symmetric ($\rho\sim\sigma$ implies $\sigma\sim\rho$ for any two density operators $\rho,\sigma$). 
\end{definition}

An example of neighbouring relationship can be defined using the trace distance over quantum density operators, i.e., $\rho\sim\sigma$ if $\|\rho-\sigma\|_1\leq \kappa$ for some constant $\kappa>0$~\cite{zhou2017differential}. This is formalized in the following definition.

\begin{definition}[Closeness Neighbouring Relationship]
    $\rho\sim\sigma$ if $\|\rho-\sigma\|_1\leq \kappa$.
\end{definition}

Note that other neighbouring relationships can be embedded in differential privacy. For instance, two quantum density operators can be neighbouring if they are encode two private datasets that differ in the data of one individual~\cite{dwork2008differential}. 

\begin{definition}
    For any $\epsilon,\delta\geq 0$, a quantum channel $\mathcal{E}$ is $(\epsilon,\delta)$-differentially private if 
    \begin{align}
        \trace(M\mathcal{E}(\rho))\leq \exp(\epsilon) \trace(M\mathcal{E}(\sigma))+\delta,
    \end{align}
    for all measurements $0\preceq M\preceq I$ and neighbouring density operators $\rho\sim\sigma$. 
\end{definition}

An interesting question is to measure the effect of the quantum differential privacy on information leakage.

\begin{proposition}\label{prop:upper_bound_B_P_DP}
    $\mathcal{B}(X\rightarrow A)_{\mathcal{E}(\rho_A)}\leq\mathcal{R}(X\rightarrow A)_{\mathcal{E}(\rho_A)}\leq \epsilon/\ln(2)$ if $\rho_A^x\sim\rho_A^{x'}$ for all $x,x'\in\X$ under any neighbouring relationship and quantum channel $\mathcal{E}$ is $(\epsilon,0)$-differentially private.
\end{proposition}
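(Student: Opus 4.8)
The plan is to dispatch the first inequality immediately and concentrate the real work on the bound $\mathcal{R}(X\rightarrow A)_{\mathcal{E}(\rho_A)}\leq\epsilon/\ln(2)$. The inequality $\mathcal{B}(X\rightarrow A)_{\mathcal{E}(\rho_A)}\leq\mathcal{R}(X\rightarrow A)_{\mathcal{E}(\rho_A)}$ is nothing more than Corollary~\ref{cor:pqrl} applied to the pushed-forward ensemble $\{p_X(x),\mathcal{E}(\rho_A^x)\}_{x\in\X}$, so no new argument is required there.

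For the second inequality, recall from Definition~\ref{def:pqrl} that $\mathcal{R}(X\rightarrow A)_{\mathcal{E}(\rho_A)}=\max_{x,x'\in\X}\widetilde{D}_\infty(\mathcal{E}(\rho_A^x)\|\mathcal{E}(\rho_A^{x'}))$, and that, by the extension formula for the sandwiched divergence, $\widetilde{D}_\infty(\rho\|\sigma)=\log(\inf\{\mu\in\mathbb{R}:\rho\preceq\mu\sigma\})$. It therefore suffices to show, for each fixed pair $x,x'\in\X$, that $\mu=\exp(\epsilon)$ is feasible, i.e. that $\mathcal{E}(\rho_A^x)\preceq\exp(\epsilon)\,\mathcal{E}(\rho_A^{x'})$. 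Once this operator inequality is in hand, monotonicity of $\log$ gives $\widetilde{D}_\infty(\mathcal{E}(\rho_A^x)\|\mathcal{E}(\rho_A^{x'}))\leq\log(\exp(\epsilon))=\epsilon/\ln(2)$, where the last equality uses the base-$2$ convention for $\log$; since the hypothesis $\rho_A^x\sim\rho_A^{x'}$ holds for every pair, the bound is uniform and taking the maximum closes the argument.

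The crux, and the only genuinely nontrivial step, is converting the $(\epsilon,0)$-differential privacy statement into that single operator inequality. Differential privacy is phrased as a family of scalar trace inequalities $\trace(M\mathcal{E}(\rho_A^x))\leq\exp(\epsilon)\trace(M\mathcal{E}(\rho_A^{x'}))$ indexed by all measurements $0\preceq M\preceq I$, whereas what I need is the Loewner-order statement. I would bridge the two via the standard characterization: a Hermitian operator $H$ satisfies $H\succeq 0$ if and only if $\trace(MH)\geq 0$ for every $M\succeq 0$. Setting $H:=\exp(\epsilon)\,\mathcal{E}(\rho_A^{x'})-\mathcal{E}(\rho_A^x)$, the privacy hypothesis directly yields $\trace(MH)\geq 0$ for all $M$ with $0\preceq M\preceq I$.

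The remaining subtlety is that privacy only supplies the trace inequality for measurements bounded by the identity, while the characterization quantifies over all $M\succeq 0$. This gap is closed by positive homogeneity: for any nonzero $M\succeq 0$ one writes $M=\|M\|\,(M/\|M\|)$ with $0\preceq M/\|M\|\preceq I$ (where $\|\cdot\|$ is the operator norm, so the largest eigenvalue of $M/\|M\|$ equals one), and scaling the valid inequality for $M/\|M\|$ by the nonnegative constant $\|M\|$ recovers it for $M$; the case $M=0$ is trivial. Hence $\trace(MH)\geq 0$ for all $M\succeq 0$, so $H\succeq 0$, i.e. $\mathcal{E}(\rho_A^x)\preceq\exp(\epsilon)\,\mathcal{E}(\rho_A^{x'})$, which is exactly the feasibility of $\mu=\exp(\epsilon)$ that the argument required.
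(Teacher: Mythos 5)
Your proof is correct, and its overall architecture matches the paper's: reduce to the pairwise quantity via Corollary~\ref{cor:pqrl}, and bound each $\widetilde{D}_\infty(\mathcal{E}(\rho_A^x)\|\mathcal{E}(\rho_A^{x'}))$ by $\epsilon/\ln(2)$ using the $(\epsilon,0)$-differential-privacy hypothesis together with the neighbouring assumption $\rho_A^x\sim\rho_A^{x'}$. The difference is in how that key bound is obtained. The paper simply invokes Lemma~III.2 of~\cite{hirche2023quantum}, which states that pure $\epsilon$-differential privacy is equivalent to a max-relative-entropy bound between channel outputs on neighbouring states. You instead prove that implication from scratch: you pass from the family of scalar trace inequalities $\trace(M\mathcal{E}(\rho_A^x))\leq\exp(\epsilon)\trace(M\mathcal{E}(\rho_A^{x'}))$ over $0\preceq M\preceq I$ to the single Loewner-order inequality $\mathcal{E}(\rho_A^x)\preceq\exp(\epsilon)\,\mathcal{E}(\rho_A^{x'})$ via positive homogeneity and the duality characterization of positive semi-definiteness, and then read off the feasibility of $\mu=\exp(\epsilon)$ in the variational formula for $\widetilde{D}_\infty$. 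This is exactly the content of the cited lemma, so you have made the argument self-contained and elementary at the cost of a little length; it also has the incidental benefit that the operator inequality automatically handles the support condition $\mathcal{E}(\rho_A^x)\ll\mathcal{E}(\rho_A^{x'})$, so the $\widetilde{D}_\infty=\infty$ convention never needs to be discussed. Your handling of the $\log$-versus-$\ln$ conversion, $\log(\exp(\epsilon))=\epsilon/\ln(2)$, agrees with the paper's remark about the change of base. No gaps.
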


\begin{proof}
    From Lemma~III.2~\cite{hirche2023quantum} (with $\delta=0$), we get $\widetilde{D}_\infty(\mathcal{E}(\rho_A^x)\|\mathcal{E}(\rho_A^{x'}))\leq \epsilon/\ln(2).$ Note that the division by $\ln(2)$ is caused by the use of natural basis in definition of the differential privacy and logarithms in~\cite{hirche2023quantum}. This shows that $\mathcal{R}(X\rightarrow A)_{\rho_A}\leq \epsilon/\ln(2)$. The rest follows from Corollary~\ref{cor:pqrl}. 
\end{proof}

A physical noise model for quantum systems is the global depolarizing channel:
\begin{align} \label{eqn:dep_channel}
    \mathcal{D}_{p,d_A}(\rho):=\frac{p}{d_A}I+(1-p)\rho,
\end{align}
where $d_A$ is the dimension of the Hilbert space $\mathcal{H}_A$ to which the system belongs and $p\in[0,1]$ is a probability parameter. The larger the probability parameter $p$, the noisier the global depolarizing channel $\mathcal{D}_{p,d_A}$.

\begin{figure}
    \centering
    \begin{tikzpicture}
        \node[] at (0,0) {\includegraphics[width=1\linewidth]{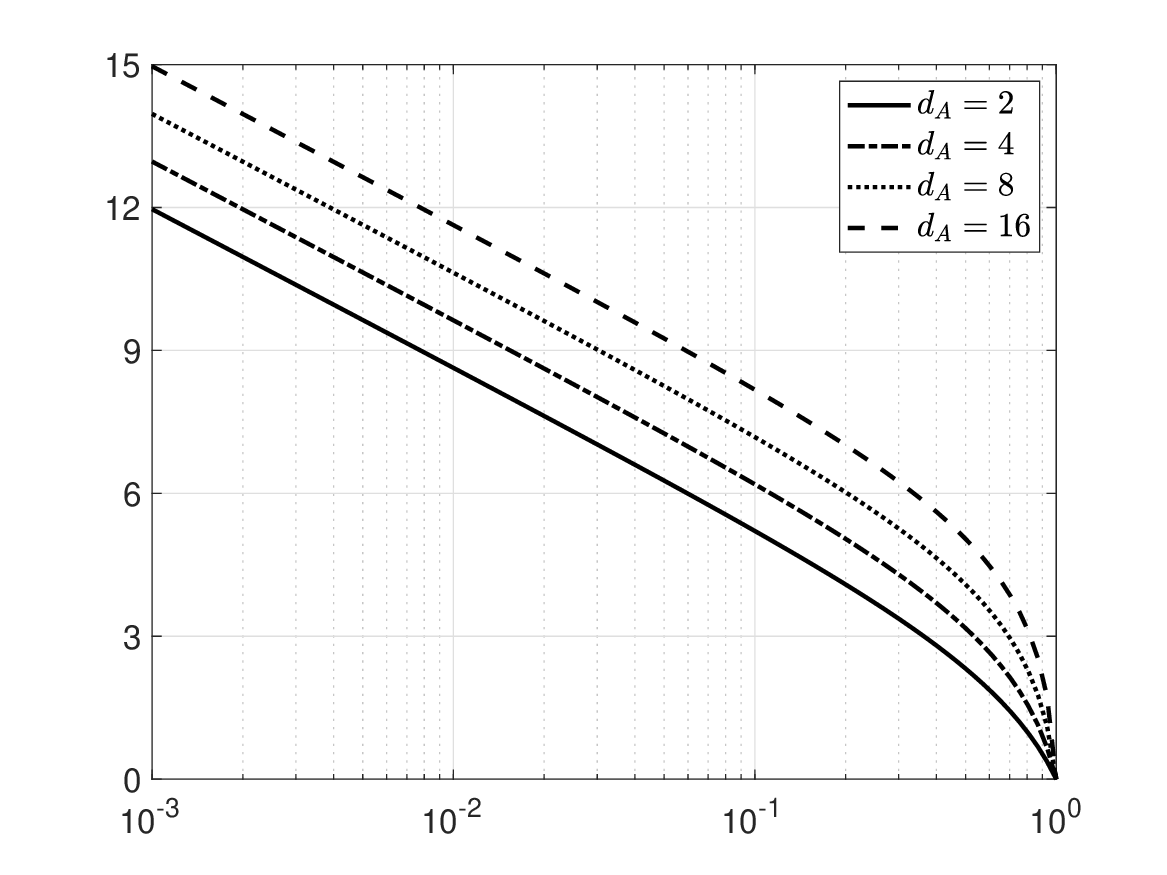}};
        \node[] at (0,-3.1) {\small probability parameter $p$};
        \node[rotate=90] at (-3.7,0) {\small $\log(1+2(1-p) d_A/p)$};
    \end{tikzpicture}
    \caption{The upper bound for $\mathcal{B}(X\rightarrow A)_{\mathcal{D}_{p,d_A}(\rho_A)}$}
    \label{fig:upper_bound}
\end{figure}

\begin{proposition} \label{prop:global}
    $\mathcal{B}(X\rightarrow A)_{\mathcal{D}_{p,d_A}(\rho_A)}\leq \mathcal{R}(X\rightarrow A)_{\mathcal{D}_{p,d_A}(\rho_A)}\leq \log(1+2(1-p) d_A/p)$.
\end{proposition}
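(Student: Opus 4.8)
The plan is to obtain the bound from the differential-privacy machinery already in place, since Proposition~\ref{prop:upper_bound_B_P_DP} converts an $(\epsilon,0)$-differential-privacy guarantee for a channel into a bound on $\mathcal{R}$ (and hence, by Corollary~\ref{cor:pqrl}, on $\mathcal{B}$). First I would fix the \emph{universal} neighbouring relationship in which every pair of density operators is declared neighbouring; this is trivially reflexive and symmetric, so it is an admissible neighbouring relation, and it makes the hypothesis $\rho_A^x\sim\rho_A^{x'}$ for all $x,x'\in\X$ of Proposition~\ref{prop:upper_bound_B_P_DP} hold automatically. It then suffices to show that the global depolarizing channel $\mathcal{D}_{p,d_A}$ is $(\epsilon,0)$-differentially private with $\epsilon=\ln(1+2(1-p)d_A/p)$ with respect to this relation, because Proposition~\ref{prop:upper_bound_B_P_DP} immediately yields $\mathcal{B}(X\rightarrow A)_{\mathcal{D}_{p,d_A}(\rho_A)}\leq\mathcal{R}(X\rightarrow A)_{\mathcal{D}_{p,d_A}(\rho_A)}\leq\epsilon/\ln(2)=\log(1+2(1-p)d_A/p)$.

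The core step is therefore verifying the universal differential privacy of $\mathcal{D}_{p,d_A}$. For any measurement $0\preceq M\preceq I$ and any two density operators $\rho,\sigma$, I would expand $\trace(M\mathcal{D}_{p,d_A}(\rho))=(p/d_A)\trace(M)+(1-p)\trace(M\rho)$ and likewise for $\sigma$. Lower-bounding the ``$\sigma$'' expression by discarding the nonnegative term $(1-p)\trace(M\sigma)$ gives $\trace(M\mathcal{D}_{p,d_A}(\sigma))\geq (p/d_A)\trace(M)$, while $\rho\preceq I$ gives $\trace(M\rho)\leq\trace(M)$. Dividing yields $\trace(M\mathcal{D}_{p,d_A}(\rho))\leq(1+(1-p)d_A/p)\,\trace(M\mathcal{D}_{p,d_A}(\sigma))$, which already establishes the required privacy with room to spare relative to the stated constant. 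The factor of two in the proposition is exactly the slack one picks up by instead routing the estimate through the generic trace-distance bound underlying Lemma~III.2 of~\cite{hirche2023quantum}, where $\rho-\sigma\preceq\|\rho-\sigma\|_1 I\preceq 2I$ is used in place of the sharper $\rho-\sigma\preceq\tfrac{1}{2}\|\rho-\sigma\|_1 I\preceq I$; either constant makes the inequality true, so the claimed bound follows.

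Equivalently, I could bypass differential privacy and argue directly from the definition $\mathcal{R}(X\rightarrow A)_{\mathcal{D}_{p,d_A}(\rho_A)}=\max_{x,x'\in\X}\widetilde{D}_\infty(\mathcal{D}_{p,d_A}(\rho_A^x)\|\mathcal{D}_{p,d_A}(\rho_A^{x'}))$, using the characterization $\widetilde{D}_\infty(\alpha\|\beta)=\log\inf\{\mu\in\mathbb{R}:\alpha\preceq\mu\beta\}$: it is enough to exhibit a single admissible $\mu$ with $(p/d_A)I+(1-p)\rho_A^x\preceq\mu\big((p/d_A)I+(1-p)\rho_A^{x'}\big)$, which follows from $\rho_A^x\preceq I$ and $\rho_A^{x'}\succeq 0$, and then take the maximum over $x,x'$. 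I expect the only real obstacle to be the operator-inequality bookkeeping that fixes the constant; the presence of the factor two (rather than the sharper $\log(1+(1-p)d_A/p)$ that a tight analysis would give) signals that the intended proof goes through the generic quantum differential-privacy estimate rather than the sharp barycentric/pairwise computation.
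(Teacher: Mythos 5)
Your proposal is correct and follows essentially the same route as the paper: the paper likewise deduces the bound from $(\epsilon,0)$-differential privacy of $\mathcal{D}_{p,d_A}$ (citing Lemma~IV.2 of~\cite{hirche2023quantum} with $\kappa=2$, since $\|\rho-\sigma\|_1\leq 2$ for any two density operators) combined with Proposition~\ref{prop:upper_bound_B_P_DP} and Corollary~\ref{cor:pqrl}. Your self-contained verification of the differential-privacy claim, and your observation that the direct operator inequality actually yields the sharper constant $\log(1+(1-p)d_A/p)$, are accurate but do not change the overall structure of the argument.
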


\begin{proof}
Lemma~IV.2~\cite{hirche2023quantum} demonstrates that $\mathcal{D}_{p,d_A}$ is $(\epsilon,0)$-differentially-private with $\epsilon=\ln(1+2(1-p) d_A/p)$. Note that, here, we set $\kappa=2$ as always $\|\rho_A^x-\rho_A^{x'}\|\leq 2$ for any two density operators $\rho_A^x$ and $\rho_A^{x'}$. Finally, note that $\ln(1+2(1-p) d_A/p)/\ln(2)=\log(1+2(1-p) d_A/p)$. The rest follows from Propositions~\ref{prop:accessible} and~\ref{prop:upper_bound_B_P_DP}.
\end{proof}

Proposition~\ref{prop:global} demonstrates that by increasing $p$, the amount of leaked information, measured by both the barycentric and pairwise quantum R\'{e}nyi leakage, tends to zero. This is established for the global depolarizing channels. However, in quantum computing devices, each qubit can be affected by local noise. To model this case, assume that the Hilbert space $\mathcal{H}_A$ is composed of $k$ qubits so that $d_A=2^k$. The local depolarizing noise channel is defined as
\begin{align}
    \mathcal{D}_{p,2}^{\otimes k}:=\mathcal{D}_{p,2}\otimes\cdots\otimes \mathcal{D}_{p,2},
\end{align}
where a depolarising channel $\mathcal{D}_{p,2}$ acts locally on each qubit. 

\begin{proposition} \label{prop:local}
$\mathcal{B}(X\rightarrow A)_{\mathcal{D}_{p,2}^{\otimes k}(\rho_A)}\leq \mathcal{R}(X\rightarrow A)_{\mathcal{D}_{p,2}^{\otimes k}(\rho_A)}\leq \log(1+2(1-p) d_A/p)$.
\end{proposition}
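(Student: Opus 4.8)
\emph{Proof plan.} The plan is to certify the bound exactly as in the proof of Proposition~\ref{prop:global}: show that $\mathcal{D}_{p,2}^{\otimes k}$ is $(\epsilon,0)$-differentially private under the closeness relationship with $\kappa=2$ (so that every pair $\rho_A^x,\rho_A^{x'}$ is neighbouring) for some $\epsilon$ with $\epsilon/\ln(2)\leq \log(1+2(1-p)d_A/p)$, and then invoke Proposition~\ref{prop:upper_bound_B_P_DP} to bound $\mathcal{R}(X\rightarrow A)_{\mathcal{D}_{p,2}^{\otimes k}(\rho_A)}$, with Corollary~\ref{cor:pqrl} supplying the first inequality $\mathcal{B}\leq\mathcal{R}$. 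Because $\delta=0$, the differential-privacy certificate is equivalent to a Loewner-order statement, so the whole problem reduces to bounding
\begin{align*}
\widetilde{D}_\infty\!\left(\mathcal{D}_{p,2}^{\otimes k}(\rho_A^x)\big\|\mathcal{D}_{p,2}^{\otimes k}(\rho_A^{x'})\right)=\log\inf\!\left\{\mu:\mathcal{D}_{p,2}^{\otimes k}(\rho_A^x)\leq \mu\,\mathcal{D}_{p,2}^{\otimes k}(\rho_A^{x'})\right\}
\end{align*}
uniformly over all density operators, i.e.\ to locating the smallest $\mu$ realizing this operator inequality.

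The ingredients I would assemble first are the single-qubit identity $\mathcal{D}_{p,2}(\rho)=(1-p)\rho+(p/2)I\geq (p/2)I$, valid for every one-qubit state, together with the tensor expansion $\mathcal{D}_{p,2}^{\otimes k}=\sum_{S\subseteq\{1,\dots,k\}} p^{|S|}(1-p)^{k-|S|}\mathcal{R}_S$, where $\mathcal{R}_S$ traces out the qubits indexed by $S$ and reinserts the maximally mixed state on them. The full-replacement term $S=\{1,\dots,k\}$ alone gives $\mathcal{D}_{p,2}^{\otimes k}(\rho)\geq (p/2)^k I$, which lower-bounds the denominator, while $\mathcal{D}_{p,2}^{\otimes k}(\rho)\leq I$ upper-bounds the numerator; together these already certify $(\epsilon,0)$-differential privacy with $\epsilon=k\ln(2/p)$.

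The main obstacle is that this factorized certificate is far too lossy to reach the stated bound, and closing the gap is where essentially all the difficulty lies. A quick check on mutually orthogonal product inputs $\rho_A^x=\bigotimes_i\ket{\psi_i}\bra{\psi_i}$ and $\rho_A^{x'}=\bigotimes_i\ket{\psi_i^\perp}\bra{\psi_i^\perp}$ forces $\mu\geq\big((2-p)/p\big)^k$, so the exponent scales linearly in $k$ rather than logarithmically in $d_A$; since $((2-p)/p)^k$ dominates $1+2(1-p)2^k/p$ for small $p$, a direct eigenvalue argument under the $\kappa=2$ relationship does not by itself deliver $\log(1+2(1-p)d_A/p)$. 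To reach the claimed exponent I would therefore not argue qubit-by-qubit but look for a global mechanism: either a comparison expressing $\mathcal{D}_{p,2}^{\otimes k}$ as a post-processing of a single global depolarizing channel $\mathcal{D}_{q,d_A}$ for a suitable $q$, so that the data-processing inequality of Lemma~\ref{lemma:properties:sandwiched}(a) transports the estimate of Proposition~\ref{prop:global}, or a dedicated differential-privacy bound for the local depolarizing channel (the local-depolarizing analogue of Lemma~IV.2 in~\cite{hirche2023quantum}) that controls the largest of the operators $\mathcal{R}_S(\rho_A^x)$ against the smallest of the $\mathcal{R}_S(\rho_A^{x'})$ in the Loewner order without decoupling the tensor factors. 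Taming these cross terms tightly enough that the privacy exponent tracks $d_A=2^k$ rather than compounding across the $k$ factors is the crux, and I expect it to be the genuinely hard—and delicate—step.
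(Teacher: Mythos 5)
Your plan follows the paper's intended route (a differential-privacy certificate for $\mathcal{D}_{p,2}^{\otimes k}$ fed into Proposition~\ref{prop:upper_bound_B_P_DP}, with Corollary~\ref{cor:pqrl} supplying $\mathcal{B}\leq\mathcal{R}$), but you stop exactly at the step that matters: you never exhibit a certificate with $\epsilon=\ln(1+2(1-p)d_A/p)$, and you correctly sense that none of your listed mechanisms produces one. For comparison, the paper's own proof is a single sentence deferring to Corollary~IV.3 of~\cite{hirche2023quantum}; it gives no argument that this corollary applies under the closeness relationship with $\kappa=2$, which is what Proposition~\ref{prop:upper_bound_B_P_DP} requires in order to cover \emph{every} pair $\rho_A^x,\rho_A^{x'}$ of a general ensemble.

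In fact your ``quick check'' is not merely an obstacle to one proof strategy---it is a counterexample to the stated inequality, and you should have drawn that conclusion. Take $k=2$, $p=0.1$, $\rho_A^x=\ket{00}\bra{00}$, $\rho_A^{x'}=\ket{11}\bra{11}$. Then $\mathcal{D}_{p,2}^{\otimes 2}(\rho_A^x)=\mathrm{diag}(0.9025,0.0475,0.0475,0.0025)$ and $\mathcal{D}_{p,2}^{\otimes 2}(\rho_A^{x'})$ is the same diagonal reversed, so
\begin{align*}
\widetilde{D}_\infty\!\left(\mathcal{D}_{p,2}^{\otimes 2}(\rho_A^x)\middle\|\mathcal{D}_{p,2}^{\otimes 2}(\rho_A^{x'})\right)=\log\frac{(1-p/2)^2}{(p/2)^2}=\log 361\approx 8.50,
\end{align*}
whereas $\log(1+2(1-p)d_A/p)=\log 73\approx 6.19$. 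More generally, orthogonal product encodings give pairwise leakage $k\log((2-p)/p)$, which exceeds $\log(1+2(1-p)2^k/p)$ for every $k\geq 2$ once $p$ is small. This settles the fate of both of your proposed rescues: a factorization $\mathcal{D}_{p,2}^{\otimes k}=\Lambda\circ\mathcal{D}_{q,d_A}$ with $q\geq p$ would, via the post-processing inequality of Lemma~\ref{lemma:properties:sandwiched}(a) and Proposition~\ref{prop:global}, contradict the numbers above, and a local analogue of Lemma~IV.2 with the claimed $\epsilon$ cannot hold under the $\kappa=2$ relation; the cited Corollary~IV.3 must therefore concern a more restrictive neighbouring relation (states differing on a single subsystem), under which the hypothesis ``$\rho_A^x\sim\rho_A^{x'}$ for all $x,x'$'' of Proposition~\ref{prop:upper_bound_B_P_DP} fails for a general ensemble. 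The honest endpoint of your analysis is that the bound must degrade to the $k\log(2/p)$-type scaling of your ``factorized certificate,'' not that a cleverer argument recovers $\log(1+2(1-p)d_A/p)$; as written, the crux step you defer is not hard but impossible.
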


\begin{proof}
The proof is similar to that of Proposition~\ref{prop:global} with the exception of relying on Corollary IV.3~\cite{hirche2023quantum}.
\end{proof}

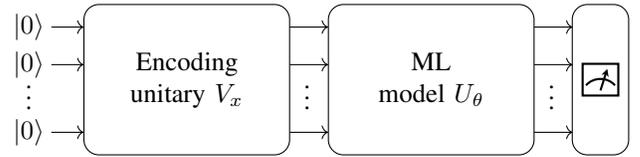
\begin{figure}
    \centering
    \begin{tikzpicture}
        \node[draw,rectangle,rounded corners=2mm,minimum width=2.5cm,minimum height=2cm]{
        \begin{minipage}{2.5cm}
        \centering 
            Encoding \\ unitary $V_x$
        \end{minipage}
        };
        \node[] at (-2.1,.7) {$\ket{0}$};
        \node[] at (-2.1,.2) {$\ket{0}$};
        \node[] at (-2.1,-.13) {$\vdots$};
        \node[] at (-2.1,-.7) {$\ket{0}$};
        \draw[->] (-1.8,.7) -- (-1.37,.7);
        \draw[->] (-1.8,.2) -- (-1.37,.2);
        \draw[->] (-1.8,-.7) -- (-1.37,-.7);
        \node[draw,rectangle,rounded corners=2mm,minimum width=2.5cm,minimum height=2cm] at (3.25,0) {
        \begin{minipage}{2.5cm}
        \centering 
            ML \\ model $U_\theta$
        \end{minipage}
        };
        \draw[->,xshift=3.25cm] (-1.88,.7) -- (-1.37,.7);
        \draw[->,xshift=3.25cm] (-1.88,.2) -- (-1.37,.2);
        \draw[->,xshift=3.25cm] (-1.88,-.7) -- (-1.37,-.7);
        \node[xshift=3.9cm] at (-2.3,-.13) {$\vdots$};
        \node[draw,rectangle,rounded corners=2mm,minimum width=.75cm,minimum height=2cm] at (5.5,0) {};
        \draw[->,xshift=6.5cm] (-1.88,.7) -- (-1.37,.7);
        \draw[->,xshift=6.5cm] (-1.88,.2) -- (-1.37,.2);
        \draw[->,xshift=6.5cm] (-1.88,-.7) -- (-1.37,-.7);
        \node[xshift=7.15cm] at (-2.3,-.13) {$\vdots$};
        \node[meter,color=black,scale=.45] at (5.5,0) {};
    \end{tikzpicture}
    \caption{A variational quantum machine learning model with encoding unitary $V_x$ and variational circuit $U_\theta$.}
    \label{fig:qml}
\end{figure}

\section{Quantum Machine Learning} \label{sec:qml}
In this section, we consider variational circuits for implementing quantum machine learning and analyzing privacy-utility trade-off in these models. The first step in a variational circuit is an encoding layer that transforms the classical data, i.e., input of the quantum machine learning model, into a quantum state. In the notation of Section~\ref{sec:info_leakage}, the ensemble $\mathcal{E}:=\{p_X(x),\rho_A^x\}_{x\in\X}$ is used to model this layer. The next layer is a variational unitary $U_\theta$ with tunable parameter $\theta$. After this layer, the state of the quantum system is $U_\theta \rho_A^xU_\theta^\dag$. Finally, measurements are taken to determine the output label. The measurement can be modelled by POVM $O=\{O_c\}$, where the outcome $c$ denotes the class to which the input belongs. Figure~\ref{fig:qml} illustrates a variational quantum machine learning model with encoding unitary $V_x$ and variational circuit $U_\theta$. Here, the quantum states are initialized at $\ket{0}\otimes\cdots\otimes\ket{0}$ and, therefore, we have $\rho_A^x=V_x\ket{0}\otimes\cdots\otimes\ket{0}\bra{0}\otimes\cdots\otimes\bra{0}V_x^\dag$. Training the quantum machine learning model entails finding parameters $\theta$, e.g., by gradient descent, to minimize the prediction error based on a training dataset. To achieve private machine learning, we can add a global or local depolarizing channel to ensure differential privacy or to bound the information leakage (in the language of this paper). The performance degradation caused by the quantum channel $\mathcal{E}$ is 
\begin{align}
    \Gamma(\mathcal{E}):=\max_{x\in\X}\sum_{c} |&\trace(O_cU_\theta \rho_A^xU_\theta^\dag)\nonumber\\
    &-\trace(O_c\mathcal{E}(U_\theta \rho_A^xU_\theta^\dag))|,
\end{align}
which captures the changes in the probability of reporting each class by addition of the quantum channel $\mathcal{E}$. The following proposition provides an upper bound for the performance degradation caused by the global depolarizing channel.

\begin{proposition} \label{prop:degradation_global}
    $\Gamma(\mathcal{D}_{p,d_A})\leq 2p$.
\end{proposition}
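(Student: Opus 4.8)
The plan is to reduce the expression for $\Gamma(\mathcal{D}_{p,d_A})$ to an $\ell^1$ distance between two probability distributions and then invoke the crude bound that the $\ell^1$ distance between two probability vectors never exceeds $2$. The first step is to fix an arbitrary $x\in\X$ and abbreviate $\xi:=U_\theta\rho_A^xU_\theta^\dag$, which is itself a density operator because $U_\theta$ is unitary and $\rho_A^x\in\mathcal{S}(\mathcal{H}_A)$.

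Next I would substitute the definition~\eqref{eqn:dep_channel} of the global depolarizing channel and use linearity to compute the operator difference explicitly:
\begin{align*}
\xi-\mathcal{D}_{p,d_A}(\xi)=\xi-\frac{p}{d_A}I-(1-p)\xi=p\left(\xi-\frac{1}{d_A}I\right).
\end{align*}
Taking the trace against $O_c$ and using linearity of the trace, the $c$-th summand in the definition of $\Gamma$ becomes
\begin{align*}
\left|\trace(O_c\xi)-\trace(O_c\mathcal{D}_{p,d_A}(\xi))\right|=p\left|\trace(O_c\xi)-\frac{1}{d_A}\trace(O_c)\right|.
\end{align*}

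The key observation is then that both families $q_c:=\trace(O_c\xi)$ and $r_c:=\trace(O_c)/d_A=\trace(O_c\,I/d_A)$ define probability distributions over the class label $c$. Indeed, summing over $c$ and using the POVM completeness relation $\sum_c O_c=I$ gives $\sum_c q_c=\trace(\xi)=1$ and $\sum_c r_c=\trace(I)/d_A=1$, while nonnegativity of each term follows from $O_c\succeq 0$ together with the positivity of $\xi$ and of $I/d_A$. Therefore, by the triangle inequality (equivalently, by the fact that the total variation distance is at most one),
\begin{align*}
\sum_c|q_c-r_c|\leq \sum_c q_c+\sum_c r_c=2.
\end{align*}
Multiplying through by $p$ and taking the maximum over $x\in\X$ then yields $\Gamma(\mathcal{D}_{p,d_A})\leq 2p$.

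I do not expect a genuine obstacle here; the computation is essentially forced once the channel definition is inserted. The only point requiring care is the POVM completeness step, since it is precisely $\sum_c O_c=I$ that pins both $q$ and $r$ to the same probability simplex and thereby produces the clean constant $2$ rather than a dimension-dependent bound. It is worth noting that this argument never uses the specific structure of $\xi$ beyond its being a density operator, so the same bound holds verbatim for the post-unitary state of any encoding.
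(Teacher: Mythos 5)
Your proof is correct and follows essentially the same route as the paper: expand the depolarizing channel, factor out $p$, apply the triangle inequality to each summand, and use the POVM completeness relation $\sum_c O_c = I$ together with $\trace(\xi)=1$ and $\trace(I)/d_A=1$ to obtain the constant $2$. Your framing of the two families $q_c$ and $r_c$ as probability distributions is a slightly cleaner way of packaging the same computation, but there is no substantive difference.
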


\begin{proof}
Note that
\begin{align*}
	|\trace(O_cU_\theta \rho_A^xU_\theta^\dag&-O_c\mathcal{D}_{p,d_A}(U_\theta \rho_A^xU_\theta^\dag))|\\
	&=|\trace(pO_cU_\theta \rho_A^xU_\theta^\dag-O_c\frac{p}{d_A}I)|\\
	&\leq p\trace(O_cU_\theta \rho_A^xU_\theta^\dag)+p\frac{1}{d_A}\trace(O_c).
\end{align*}
As a result,
\begin{align*}
	\Gamma(\mathcal{D}_{p,d_A})
	&\leq p\max_{x\in\X}\sum_{c}(\trace(O_cU_\theta \rho_A^xU_\theta^\dag)+\frac{1}{d_A}\trace(O_c))\\
	&=2p.
\end{align*}
This concludes the proof.
\end{proof}

\begin{figure}
    \centering
    \begin{tikzpicture}
        \node[] at (0,0) {\includegraphics[width=1.0\columnwidth]{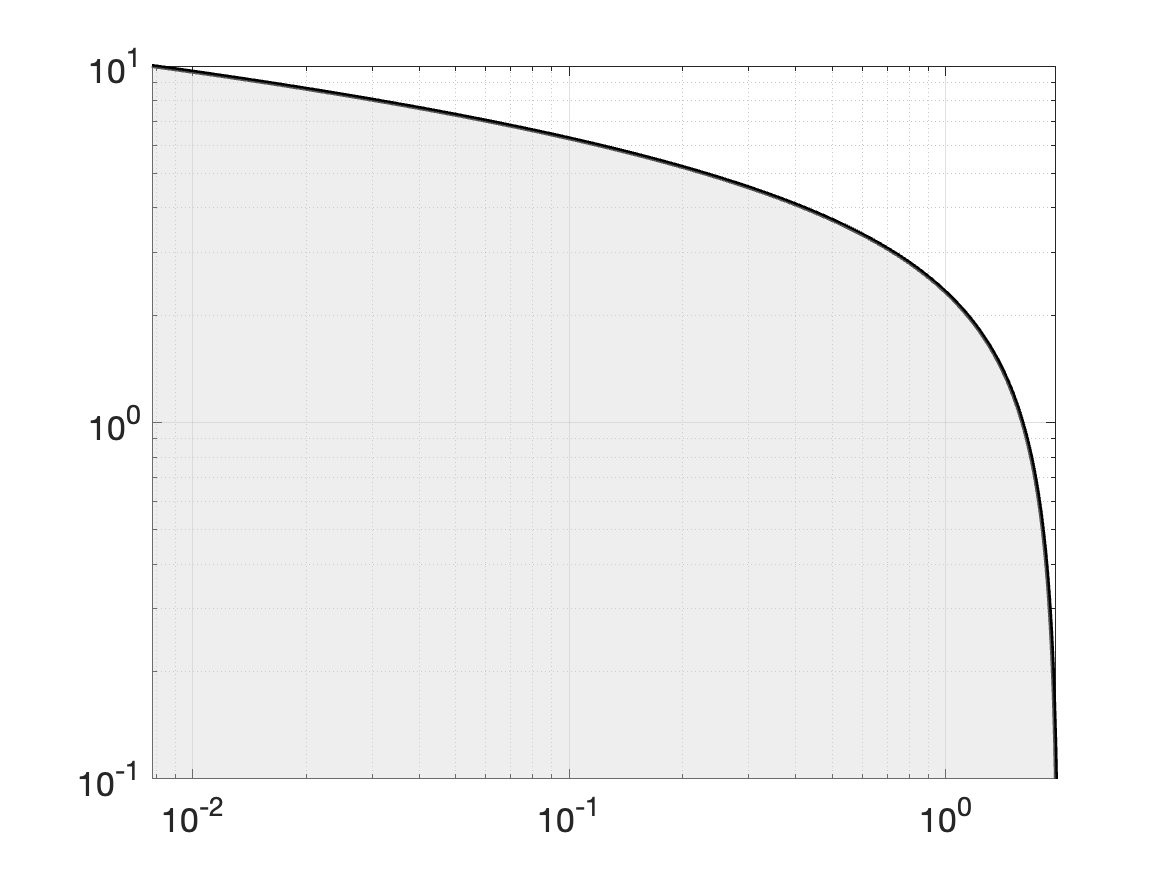}};
        \node[] at (0,-3.2) {\small Performance degradation $\Gamma(\mathcal{D}_{p,d_A})$};
        \node[rotate=90] at (-3.9,0) {\small Information leakage $\mathcal{B}(X\rightarrow A)_{\mathcal{D}_{p,d_A}(\rho_A)}$};
        \node[] at (.5,0.5) {\footnotesize $\log\!\bigg(\!(1\!-\!2d_A)\!+\!\frac{\displaystyle4d_A}{\displaystyle\Gamma(\mathcal{D}_{p,d_A})}\!\bigg)$};
        \draw[->] (-1,.7) -- (-.2,2.1);
    \end{tikzpicture}
    \caption{Privacy-utility trade-off region when using the global depolarizing channel for $d_A=2$. The solid curve depicts the upper bound in Corollary~\ref{cor:p_vs_u}.}
    \label{fig:enter-label}
\end{figure}

\begin{corollary} \label{cor:p_vs_u} The following privacy-utility trade-off holds when using the global depolarizing channel:
    \begin{align*}
        \mathcal{B}(X\rightarrow A)_{\mathcal{D}_{p,d_A}(\rho_A)}
        &\leq \mathcal{R}(X\rightarrow A)_{\mathcal{D}_{p,d_A}(\rho_A)}\\
        &\leq \log\left((1-2d_A)+\frac{4d_A}{\Gamma(\mathcal{D}_{p,d_A})}\right)
    \end{align*}
\end{corollary}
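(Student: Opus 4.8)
The plan is to combine the two quantitative bounds already established for the global depolarizing channel --- the information-leakage bound of Proposition~\ref{prop:global} and the performance-degradation bound of Proposition~\ref{prop:degradation_global} --- by eliminating the probability parameter $p$ between them. First I would invoke Proposition~\ref{prop:global} to obtain $\mathcal{B}(X\rightarrow A)_{\mathcal{D}_{p,d_A}(\rho_A)}\leq \mathcal{R}(X\rightarrow A)_{\mathcal{D}_{p,d_A}(\rho_A)}\leq \log\!\left(1+2(1-p)d_A/p\right)$, so that the entire left-hand side of the claimed inequality is controlled by a single explicit function of $p$.

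Next I would rewrite the argument of the logarithm in a form that isolates the dependence on $1/p$, using the elementary identity $1+2(1-p)d_A/p=(1-2d_A)+2d_A/p$. This puts the leakage bound into exactly the shape appearing on the right-hand side of the corollary, except that the term $2d_A/p$ must still be replaced by $4d_A/\Gamma(\mathcal{D}_{p,d_A})$.

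The substitution is precisely where Proposition~\ref{prop:degradation_global} enters: from $\Gamma(\mathcal{D}_{p,d_A})\leq 2p$ I would deduce $p\geq \Gamma(\mathcal{D}_{p,d_A})/2$, hence $1/p\leq 2/\Gamma(\mathcal{D}_{p,d_A})$ and therefore $2d_A/p\leq 4d_A/\Gamma(\mathcal{D}_{p,d_A})$ (using $d_A>0$). Since the logarithm is increasing and its argument stays positive, replacing $2d_A/p$ by the larger quantity $4d_A/\Gamma(\mathcal{D}_{p,d_A})$ only increases the bound, which yields $\mathcal{R}(X\rightarrow A)_{\mathcal{D}_{p,d_A}(\rho_A)}\leq \log\!\left((1-2d_A)+4d_A/\Gamma(\mathcal{D}_{p,d_A})\right)$, as required.

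I do not anticipate a genuine obstacle here; the result is a short fusion of two existing estimates. The only point demanding care is monotonicity bookkeeping: one must confirm that the term $(1-2d_A)$, which is negative for $d_A\geq 1$, does not spoil the positivity of the argument (it does not, since the original argument $1+2(1-p)d_A/p\geq 1$), and that the inequality on $1/p$ is applied in the correct direction so that monotonicity of the logarithm produces an upper rather than a lower bound.
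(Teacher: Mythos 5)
Your proposal is correct and follows exactly the paper's route: the paper's proof likewise deduces $1/p\leq 2/\Gamma(\mathcal{D}_{p,d_A})$ from Proposition~\ref{prop:degradation_global} and substitutes it into the bound of Proposition~\ref{prop:global}. You merely spell out the algebraic rewriting $1+2(1-p)d_A/p=(1-2d_A)+2d_A/p$ and the monotonicity check, which the paper leaves implicit.
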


\begin{proof}
    Proposition~\ref{prop:degradation_global} shows that $1/p\leq 2/\Gamma(\mathcal{D}_{p,d_A})$. Combining this with the inequality in Proposition~\ref{prop:global} finishes the proof.
\end{proof}

Figure~\ref{fig:enter-label} illustrates the privacy-utility trade-off region when using the global depolarizing channel for $d_A=2$. The solid curve depicts the upper bound in Corollary~\ref{cor:p_vs_u}. Note that, as expected, the privacy guarantees can only be straightened, i.e., information leakage is reduced, if the performance degradation is larger. 

\section{Conclusions and Future Work}
\label{sec:conc}

Two new measures of information leakage for security and privacy analysis against arbitrary eavesdroppers, i.e., adversaries whose intention is not known by the analyst, was proposed.  
They satisfy important properties of positivity, independence, post-processing inequality, and unitary invariance. They can also be computed easily. Differentially-private quantum channels was shown to bound these new notions of information leakage. Finally, the fundamental problem of privacy-utility trade-off in quantum machine learning models was analyzed using the proposed notions of information leakage. Future work can focus on developing optimal privacy-preserving policies by minimizing the information leakage subject to a constraint on the utility. This approach is widely used in classical data privacy literature when using information-theoretic notions for security and privacy analysis. Furthermore, these measures can be used for analysis of wiretap or cipher channels, where the eavesdropper is generalized (not interested in estimating the entire secret data but can also focus on partial data recovery).

\bibliographystyle{ieeetr}
\bibliography{ref}

\appendices
\section{Technical Lemma}
\label{sec:tehnical_lemma}
\begin{lemma} \label{lemma:swap}
	The following holds:
\begin{align*}
	\lim_{\alpha\rightarrow \infty}\max_{x\in\X}\widetilde{D}_\alpha(\rho_A^x\|\rho_A)=\max_{x\in\X}\lim_{\alpha\rightarrow \infty}\widetilde{D}_\alpha(\rho_A^x\|\rho_A).
\end{align*}
\end{lemma}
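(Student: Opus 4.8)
The plan is to treat this as an interchange of a limit with a finite maximum, where the essential tool is the monotonicity of the sandwiched quantum R\'{e}nyi relative entropy in its order parameter. Since $\X$ is a finite set, one expects the limit of the maximum to equal the maximum of the limits provided each pointwise-in-$x$ limit exists; the role of monotonicity is precisely to guarantee existence of these limits (as suprema) and to render the interchange completely elementary.

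First I would record that, for each fixed $x\in\X$, the map $\alpha\mapsto\widetilde{D}_\alpha(\rho_A^x\|\rho_A)$ is non-decreasing in $\alpha$, a standard property of the sandwiched quantum R\'{e}nyi relative entropy~\cite[Theorem~7]{muller2013quantum}. Consequently, for every fixed $x$ the limit $\lim_{\alpha\to\infty}\widetilde{D}_\alpha(\rho_A^x\|\rho_A)$ exists and equals $\sup_\alpha\widetilde{D}_\alpha(\rho_A^x\|\rho_A)$. Next, because a maximum over the finite set $\X$ of non-decreasing functions is again non-decreasing in $\alpha$, the left-hand quantity $\lim_{\alpha\to\infty}\max_{x\in\X}\widetilde{D}_\alpha(\rho_A^x\|\rho_A)$ also exists and equals $\sup_\alpha\max_{x\in\X}\widetilde{D}_\alpha(\rho_A^x\|\rho_A)$. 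It then remains to interchange the supremum over $\alpha$ with the maximum over the finite set $\X$,
\begin{align*}
\sup_\alpha\max_{x\in\X}\widetilde{D}_\alpha(\rho_A^x\|\rho_A)
=\max_{x\in\X}\sup_\alpha\widetilde{D}_\alpha(\rho_A^x\|\rho_A),
\end{align*}
which is valid because both operations are suprema (one over $\alpha$, one over the finite set $\X$) and suprema always commute. Substituting $\sup_\alpha=\lim_{\alpha\to\infty}$ on each side, as established above, yields the claim.

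The argument is essentially routine, so the only point requiring care---and the mildest of obstacles---is to confirm that the monotonicity result genuinely applies and that the pointwise limits are well defined. Here $\rho_A=\sum_{x'\in\X}p_X(x')\rho_A^{x'}$ dominates each $\rho_A^x$ in support, i.e.\ $\rho_A^x\ll\rho_A$, so every $\widetilde{D}_\alpha(\rho_A^x\|\rho_A)$ is finite and the limits are genuine real numbers; even were this not the case, the supremum-interchange step remains valid in the extended reals $[-\infty,+\infty]$, so no separate case analysis is needed. I would emphasize that the \emph{finiteness} of $\X$ is the feature making the interchange permissible, since for an infinite index set the two sides could in general differ.
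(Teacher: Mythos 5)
Your proof is correct, but it takes a genuinely different route from the paper's. The paper argues via pointwise convergence plus finiteness of $\X$: for each $x$ it extracts a threshold $\bar{\alpha}_{x,\epsilon}$ beyond which $\widetilde{D}_\alpha(\rho_A^x\|\rho_A)$ is within $\epsilon$ of $\widetilde{D}_\infty(\rho_A^x\|\rho_A)$, takes $\bar{\alpha}_\epsilon=\max_{x\in\X}\bar{\alpha}_{x,\epsilon}<\infty$, and deduces that the maximum converges uniformly, so the limit and the finite maximum commute. You instead invoke monotonicity of $\alpha\mapsto\widetilde{D}_\alpha$ (\cite[Theorem~7]{muller2013quantum}, which the paper itself uses elsewhere, e.g.\ in Proposition~\ref{prop:sandwitched_upper}), identify each limit with a supremum over $\alpha$, and conclude by the commutation of iterated suprema. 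Your argument is shorter and more robust: it needs no finiteness of the divergences and works in the extended reals, whereas the paper's argument explicitly relies on $\widetilde{D}_\infty(\rho_A^x\|\rho_A)<\infty$ (guaranteed here by $\rho_A^x\ll\rho_A$). Conversely, the paper's argument uses only the existence of the pointwise limits, not monotonicity, so it imports a weaker external fact. One small quibble with your closing remark: under your own monotonicity argument the interchange of suprema holds for \emph{any} index set, so finiteness of $\X$ is not what makes the interchange permissible --- it only ensures the supremum over $x$ is attained as a maximum; it is the paper's uniform-convergence argument that genuinely needs $|\X|<\infty$.
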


\begin{proof}
Note that $\lim_{\alpha\rightarrow \infty} \widetilde{D}_\alpha(\rho_A^x\|\rho_A)=\widetilde{D}_\infty(\rho_A^x\|\rho_A)$, which is finite because $\rho_A^x\ll \rho_A$ by definition. Therefore, for all $\alpha>0$, there exists $\bar{\alpha}_{x,\epsilon}$ such that $|\widetilde{D}_\alpha(\rho_A^x\|\rho_A)-\widetilde{D}_\infty(\rho_A^x\|\rho_A)|\leq \epsilon$ for all $\alpha\geq \bar{\alpha}_{x,\epsilon}$. This implies that, $\forall x\in\X$,
    \begin{align}\label{eqn:tildeD_inequality_proof:1}
        \widetilde{D}_\infty(\rho_A^x\|\rho_A)-\epsilon
        &\leq \widetilde{D}_\alpha(\rho_A^x\|\rho_A)\nonumber\\
        &\leq \widetilde{D}_\infty(\rho_A^x\|\rho_A)+\epsilon, \forall \alpha\geq \bar{\alpha}_{x,\epsilon}.
    \end{align}
    Based on this inequality, $\forall\alpha\geq \max_{x\in\X}\bar{\alpha}_{x,\epsilon}$, we can prove that
    \begin{align}\label{eqn:tildeD_inequality_proof:2}
        \max_{x\in\X}\widetilde{D}_\infty(\rho_A^x\|\rho_A)-\epsilon
        &\leq 
        \max_{x\in\X}\widetilde{D}_\alpha(\rho_A^x\|\rho_A)\nonumber\\
        &\leq \max_{x\in\X}\widetilde{D}_\infty(\rho_A^x\|\rho_A)+\epsilon.
    \end{align}
    The proof for this come from a contrapositive argument. For proving the upper bound, assume that there exists $\alpha\geq \max_{x\in\X}\bar{\alpha}_{x,\epsilon}$ such that $\max_{x\in\X}\widetilde{D}_\alpha(\rho_A^x\|\rho_A)>\max_{x\in\X}\widetilde{D}_\infty(\rho_A^x\|\rho_A)+\epsilon$. Therefore, there exists $x'\in\X$ such that $\widetilde{D}_\alpha(\rho_A^{x'}\|\rho_A)>\widetilde{D}_\infty(\rho_A^x\|\rho_A)+\epsilon$ for all $x\in\X$. This implies that $\widetilde{D}_\alpha(\rho_A^{x'}\|\rho_A)>\widetilde{D}_\infty(\rho_A^{x'}\|\rho_A)+\epsilon$, which is in contradiction with~\eqref{eqn:tildeD_inequality_proof:1}. For proving the lower bound, assume that there exists $\alpha\geq \max_{x\in\X}\bar{\alpha}_{x,\epsilon}$ such that $\max_{x\in\X}\widetilde{D}_\infty(\rho_A^x\|\rho_A)-\epsilon
        >
        \max_{x\in\X}\widetilde{D}_\alpha(\rho_A^x\|\rho_A)$. Therefore, there exists $x'\in\X$ such that $\widetilde{D}_\infty(\rho_A^{x'}\|\rho_A)-\epsilon
        >
        \widetilde{D}_\alpha(\rho_A^x\|\rho_A)$ for all $x\in\X$. This implies that $\widetilde{D}_\infty(\rho_A^{x'}\|\rho_A)-\epsilon
        >
        \widetilde{D}_\alpha(\rho_A^{x'}\|\rho_A)$, which is in contradiction with~\eqref{eqn:tildeD_inequality_proof:1}. 
    Equation~\eqref{eqn:tildeD_inequality_proof:2} shows that, for all $\epsilon>0$, there exists $\bar{\alpha}_{\epsilon}:=\max_{x\in\X}\bar{\alpha}_{x,\epsilon}$ such that 
    \begin{align*}
        |\max_{x\in\X}\widetilde{D}_\infty(\rho_A^x\|\rho_A)- \max_{x\in\X}\widetilde{D}_\alpha(\rho_A^x\|\rho_A)|
        \leq \epsilon,\forall\alpha\geq \bar{\alpha}_{\epsilon}.
    \end{align*}
    In addition, $\bar{\alpha}_{\epsilon}=\max_{x\in\X}\bar{\alpha}_{x,\epsilon}<\infty$ because $\X$ is finite. This concludes the proof.
\end{proof}
 
\end{document}